\newcommand{\N}{\mathbb{N}}
\newcommand{\sizeof}[1]{\left|#1\right|}
\newcommand{\coloring}{{\sc $P_k$-free coloring problem}\xspace}
\newcommand{\nph}{$\mathsf{NP}$-hard\xspace}
\newcommand{\sd}[1]{$S(#1)$\xspace}
\newcommand{\wasted}{wasted edge\xspace}
\newcommand{\wasteds}{wasted edges\xspace}
\newenvironment{ClaimProof}[1]{\par\noindent{\textit{Proof}}\space#1}{$\blacksquare$}
\newenvironment{repeathm}[1]{\par\addvspace{3mm}\noindent\textbf
  {\Cref{#1}.\;}}{\par\addvspace{3mm}}
\newcommand{\footremember}[2]{%
    \footnote{#2}
    \newcounter{#1}
    \setcounter{#1}{\value{footnote}}%
}
\newcommand{\footrecall}[1]{%
    \footnotemark[\value{#1}]%
}
\title{Complexity of Computing the Anti-Ramsey Numbers for Paths}
\author{Saeed Akhoondian Amiri\footremember{University of Cologne}{University of Cologne, Cologne, Germany \texttt{amiri@informatik.uni-koeln.de}}
\quad Alexandru Popa\footremember{Bucharest}{University of Bucharest and National Institute of Research and Development in Informatics, Bucharest, Romania \texttt{alexandru.popa@fmi.unibuc.ro}}\quad Mohammad Roghani\footremember{MPI}{MPI for Informatics, Saarland Informatics Campus, Saarbr\"ucken, Germany \texttt{$\{$mohammadroghani43, rsoltani97$\}$@gmail.com} \texttt{$\{$gshahkar, hovahidi$\}$@mpi-inf.mpg.de}} \footremember{Sharif}{Sharif University of Technology, Tehran, Iran }\\\quad Golnoosh Shahkarami\footrecall{MPI}  \footremember{GradSchool}{Saarbr\"ucken Graduate School of Computer Science, Saarbr\"ucken, Germany}\quad Reza Soltani\footrecall{MPI} \footrecall{Sharif}\quad Hossein Vahidi \footrecall{MPI} \footrecall{GradSchool}}
\newtheorem{theorem}{Theorem}
\newtheorem{problem}[theorem]{Problem}
\newtheorem{observation}[theorem]{Observation}
 \newtheorem{definition}[theorem]{Definition}
\newtheorem{lemma}[theorem]{Lemma}
\newtheorem{cclaim}{Claim}[theorem]
\newtheorem{claim}{Claim}[theorem]
\newcommand{\myconst}{(f_k+1)}
\newcommand{\myconsttt}{4}
\DeclareMathOperator*{\argmax}{argmax}
\begin{document}
\date{}
\maketitle

\begin{abstract}
The anti-Ramsey numbers are a fundamental notion in graph theory,
introduced in 1978, by Erd\" os, Simonovits and S\' os.  For given
graphs $G$ and $H$ the \emph{anti-Ramsey number} $\textrm{ar}(G,H)$ is
defined to be the maximum number $k$ such that there exists an
assignment of $k$ colors to the edges of $G$ in which every copy of
$H$ in $G$ has at least two edges with the same color.

Usually, combinatorists study extremal values of anti-Ramsey numbers
for various classes of graphs. There are works on the computational
complexity of the problem when $H$ is a star. Along this line of
research, we study the
complexity of computing the anti-Ramsey number $\textrm{ar}(G,P_k)$, where $P_k$
is a path of length $k$. First, we observe that when $k$ is close
to $n$, the problem is hard; hence, the challenging part is the
computational complexity of the problem when $k$ is a fixed constant.

We provide a characterization of the problem for paths of
constant length.
Our first main contribution is to prove that computing
$\textrm{ar}(G,P_k)$ for every integer $k>2$ is NP-hard. We obtain
this by providing several structural properties of such coloring in
graphs.  We investigate further and show that approximating
 $\textrm{ar}(G,P_3)$ to a factor of $n^{-1/2 - \epsilon}$ is hard
 already in $3$-partite graphs, unless $P{}=
 {}NP$. We also study the exact complexity of the precolored
version and show that there is no subexponential algorithm for the
problem unless ETH fails for any fixed constant $k$.

Given the hardness of approximation and parametrization of the problem, it is natural to study the problem on restricted graph families. Along this line,
we first introduce the notion of color connected
coloring, and, employing this structural property, we obtain a
linear time algorithm to compute $\textrm{ar}(G,P_k)$, for every integer
$k$, when the host graph, $G$, is a tree. We have introduced several
techniques in our algorithm that we believe might be helpful in
providing  approximation algorithms for other
restricted families of graphs.
\end{abstract}

\section{Introduction}

For given graphs $G$ and $H$, the \emph{anti-Ramsey number} $\textrm{ar}(G,H)$ is defined to be the maximum number $k$ such that there exists an assignment of $k$ colors to the edges of $G$ in which every copy of $H$ in $G$ has at least two edges with the same color. Classically, the graph $G$ is a large complete graph and the graph $H$ is from a particular graph class. 

The study of anti-Ramsey numbers was initiated by Erd\" os, Simonovits and S\' os in 1975~\cite{ErdosSS75}. Since then, there have been a large number of papers on the topic. There are papers that study the case when $G = K_n$ and $H$ is a: cycle, e.g.,~\cite{ErdosSS75,Montellano-Ballesteros2005,AxenovichJK04}, tree,  e.g.,~\cite{JiangW04,Jiang2002}, clique, e.g.,~\cite{FriezeR93,ErdosSS75,BlokhuisFGR01}, matching, e.g.,~\cite{Schiermeyer04,ChenLT09,HaasY12} and others,  e.g.,~\cite{ErdosSS75,AxenovichJ04}. 

The anti-Ramsey numbers are connected with the rainbow number~\cite{FujitaMO14} $rb(G,H)$, which is defined as the minimum number $k$ such that in \emph{any} coloring of  the edges of $G$ with $k$ colors, there exists a rainbow copy of $H$. Thus, $ar(G,H) = rb(G,H) - 1$. We call a coloring without a rainbow copy of $H$, an \emph{$H$-free coloring}.

Various combinatorial works studied the case when $H$ is a path or a cycle. For instance, the work of Simonovits and Sos~\cite{SimonovitsS1984} shows that there exists a constant $t$ such that for a sufficiently long path $ar(K_n, P_t)\in O(t\cdot n)$. The combinatorial analysis of the problem is extremely difficult when instead of $K_n$ we use an arbitrary graph as the \emph{host} graph. For a more detailed exposition of the combinatorial results on anti-Ramsey numbers, we refer the reader to the following surveys:~\cite{Ingo2007,FujitaMO14}. 

Besides the extremal results, the anti-Ramsey numbers have been studied from the computational point of view in several papers.  The anti-Ramsey numbers when $G$ is an arbitrary graph was studied for the case when $H$ is a star. The problem was introduced by  Feng et al.~\cite{FengZQW07,FengCZ08,FengZW09}, motivated by applications in wireless mesh networks and was termed the \emph{maximum edge $q$-coloring}.

They provide a $2$-approximation algorithm for $q=2$ and a $(1+\frac{4q-2}{3q^2-5q+2})$-approximation for $q > 2$. They show that the problem is solvable in polynomial time for trees and complete graphs in the case $q=2$. Later, Adamaszek and Popa~\cite{AdamaszekP10} show that the problem is APX-hard and present a $5/3$-approximation algorithm for graphs with a perfect matching.  For more results related to the maximum edge $q$-coloring, the reader can refer to~\cite{AdamaszekP16}.

\begin{comment}
 The maximum edge $q$-coloring models interference in a new type of wireless mesh network where each computer has $q$ interface cards. Thus, the nodes of the graph correspond to computers, the edges with the communication links and the colors with the frequencies on which two computers communicate (see more details in~\cite{FengZQW07,FengCZ08,FengZW09}).
\end{comment}

To improve our understanding on such problems, we continue the recent line of study of the computational complexity of the problem.
Similar to previous works we restrict $H$ to a basic class of graphs,  paths. We let $G$  be either an arbitrary graph or a restricted family of graphs such as trees or bipartite graphs. We provide a big picture on what is tractable and what is not tractable when we are dealing with anti-Ramsey numbers on paths. Namely we prove the following.
 
% We study the complexity of the problem on paths. In~\cite{BujtasSTDL12}, Bujtas et al. study a special case of $P_3$-free coloring. They named it the \emph{3-consecutive edge coloring of a graph}.
% Their problem is very closely related to the stable cut problem, thus they obtained hardness result on deciding whether the $3$-consecutive coloring number of the graph is $1$ or $2$. They also provide an algorithm for trees. However, the complexity of computing $ar(G,P_k)$ was not known prior to this work.

\subsection*{Our Results}
 \begin{enumerate}
 \item First, we show that computing the value of $ar(G,P_k)$ is \nph for every $k > 2$ via a reduction from the maximum independent set problem. Namely, we prove the following theorem.
 
\begin{theorem}\label{thm:hardness}
	For every $k > 2$, \coloring is \nph.
\end{theorem}

The above theorem basically states that there is no XP algorithm, parameterized by $k$, for the problem unless $P{}={}NP$. The reduction is multi stage: firstly we distinguish between the odd and even values of $k$. Then for each parity of $k$, given an instance of independent set, we construct an auxiliary graph and prove several structural lemmas on that graph to establish a one to one mapping between the maximum independent set in the original graph and the maximum anti-Ramsey coloring on the auxiliary graph. By a more careful analysis of the above proof for the special case of $k=3$, we show the problem is inapproximable by a factor $n^{- 1/2 - \epsilon}$, even on $3$-partite graphs, unless $P{}={}NP$.

Given the hardness of the problem, it is natural to investigate what would be the best exponential algorithm for the problem. We study the running time of the exact algorithm for a slight variant of the problem, namely, Precolored $P_k$-free coloring. We prove that the problem does not admit an exact algorithm with running time $2^{o(\sizeof{E(G)})}$ assuming ETH. 
\begin{theorem}\label{thm:ethprecolored}
There is no $2^{o(|E(G)|)}$ algorithm for Precolored $ar(G,P_k)$, for any fixed $k$, unless ETH fails.
\end{theorem}

To obtain such a reduction, we provide a graph construction with low edge density gadgets. This is unlike standard hardness proofs where it is possible to blow up the graph by any polynomially bounded size.

 \item Given the above hardness results, even for small values of $k$, it is natural to explore the tractability of problem when the host graph has a nice structural property. We first introduce a generic algorithmic idea, of \emph{color connected coloring} and we exploit this to develop a linear time algorithm on trees. 
\begin{theorem}\label{thm:TreePoly}
	For a tree $T$, there is an exact linear time algorithm
        that computes $ar(T,P_k)$ for every constant integer $k$; the algorithm runs in time $O(|V(T)| k^4)$.
\end{theorem}
Our algorithm is based on dynamic programming on trees, however, unlike most problems in trees, this one is not that straightforward and we employed several techniques to solve the problem.
There are known combinatorial results for cycles of length three on outerplanar graphs~\cite{DBLP:journals/dmgt/GoddardX16} and the algorithm for trees for $3$-consecutive coloring of~\cite{BujtasSTDL12}. Our algorithm is independent of the latter; however, if we set $k=3$ our algorithm solves the aforementioned problem, while the other direction does not work.
\end{enumerate}

The paper is organized as follows. In~\Cref{sec:preliminaries}, we introduce preliminaries. Then, we prove the NP-hardness of computing $ar(G,P_k)$ in \Cref{apx:hardness_pk} and next, we show the hardness of inapproximability for $P_3$-free coloring. In~\Cref{apx:finegrain} we show the exact complexity result for Precolored $P_3$-free coloring. In~\Cref{apx:approximation_trees}, we provide an exact polynomial time algorithm for trees. Finally, in~\Cref{sec:conclusions}, we summarize the results and present directions for future work.

\section{Preliminaries}\label{sec:preliminaries}

We use $\N$ to denote the set of natural numbers and we write $[n]$ to
denote the set $\{1,\ldots,n\}$. We refer the reader 
to~\cite{DBLP:books/daglib/0030488} for basic notions related to graph
theory. All the graphs considered in this paper are simple and undirected.

Let $G$ be a graph, we write $V(G)$ for its vertices and $E(G)$ for its edges. For any vertex $v\in V(G)$ we define $N(v)=\bigl\{u\in V(G) \mid \{u,v\}\in E(G) \bigr\}$ to be the open neighborhood of $v$, and $N[v]= N(v) \cup \{v\}$ as its closed neighborhood. Similarly for any subset of vertices $A \in V(G)$ we define $N[A]=\bigcup_{v\in A} N[v]$, and $N(A)=N[A]\setminus A$.
For $k \in \N^+ $ we denote by $P_k$  a path with
$k+1$ vertices. The length of $P_k$ is $k$, the number of its edges.
 Also let $p$ be a $P_k$, depending on the context we may write $p=(e_1, \ldots, e_k)$ where $e_i\in E(p)$ or $p=(v_1, \ldots, v_{k+1})$ where $v_i \in V(p)$ to describe a path. 
 % For two vertices $u,v \in
% V(G)$ we denote by $dist(u,v)$ the length of a shortest path
% connecting the two. Also, we denote by $P_u^v$ such a shortest path. We
% define the distance of two subgraphs $S_1,S_2$ of $G$, denoted by $dist(S_1,S_2)$, as the minimum $dist(u,v)$ where $u \in V(S_1)$ and $v \in V(S_2)$.

\begin{definition}[Coloring] 
Given an undirected graph $G = (V,E)$, a coloring of the edges of $G$ is a function $c : E \to \mathbb{N}$. Similarly for any subset $A \subseteq E$ we define $c(A)=\bigcup_{e\in A}c(e)$.
\end{definition}

We call a coloring of the edges of a graph $G$ a \emph{rainbow
  coloring} if for every pair of edges $e\neq e'\in E$ we have
$c(e)\neq c(e')$. Let $G,H$ be two graphs, an edge coloring $c$ of $G$ is
\emph{$H$-free coloring} if there is no rainbow subgraph of $G$ isomorphic to
$H$. We denote the number of distinct colors used in $c$ by $c_{G,H}$.
Let $\mathcal{C}$ be the set of all $H$-free colorings of $G$. 
The anti-Ramsey number of $G$ is $ar(G,H)=\max_{c \in \mathcal{C}} c_{G,H}$. We observe that if $k$ is part of the input, then the problem of computing $ar(G, P_k)$ is at least as hard as finding a Hamiltonian path.

\begin{observation}\label[lemma]{lem:largek}
Computing $ar(G,P_{\sizeof{V(G)}-1})$ is NP-hard.
\end{observation}
\begin{proof} $ar(G,P_{|V(G)|-1}) = |E|$ if and only if $G$ does not have a Hamiltonian Path. \end{proof}
In the above we can replace Hamiltonian Path in the proof with longest path and in addition use the length of this path as parameter to prove the hardness for large values of $k$.
% \subparagraph*{Fine-grained complexity: }
% In the classical complexity setting the goal is to group the problems in broad classes such as polynomial, NP-hard, FPT and so on. In the \emph{fine-grained complexity} our goal is to do a more precise classification of algorithms according to their running times. One of the main tools to prove hardness results in this setting is the Exponential Time Hypothesis (ETH) introduced by Impagliazzo and Paturi~\cite{ImpagliazzoP01}. The ETH states that there exists $\delta > 0$ such that there is no algorithm that solves 3-SAT in $O(2^{\delta n})$.

\section{Hardness of $P_k$ Anti-Ramsey Coloring}
\label{apx:hardness_pk}

\medskip

In this section for every $k > 2$, we prove the hardness by a
reduction from the maximum independent set (MIS) problem. 

\textbf{Proof Sketch: }
We construct a new graph $G'$ from a graph $G$ such that from a
maximum $P_k$-free coloring of $G'$, we can derive the size of the
maximum independent set of $G$. To obtain the desired result, we divide
the problem into three subproblems. We use the reduction with
different approaches for \begin{inparaenum}
\item $k = 4$, \item every even $k > 4$, \item every odd $k
> 1$. \end{inparaenum}

Roughly speaking, we replace every vertex
and edge with specific gadgets; this depends on the parity of
$k$. Afterward, in each case, intuitively,
we prove that if a vertex belongs to an independent set, its
corresponding gadget can be colored with more distinct colors than a
vertex that does not belong to an independent set. On the other hand,
for each case, we design edge gadgets such that their coloring can be
(almost) fixed in advance, despite the choice of
colors for the vertex gadgets. The crucial part of the proofs lies in the analysis of a
structure of the maximum $P_k$-free coloring of $G'$ and, exploiting the dependency
between vertex gadgets.

 \subsection*{Hardness of the Problem for Odd $k>1$}
\noindent 
\textbf{Assumption I: } In this part we assume $k>1$ is an odd integer.

In the following, we first present an upper bound on the number of
colors when the graph $H$ is a path. For certain technical reasons
that we will see in the proofs, we define a constant $c_k$ depending
only on $k$ with a particular lower bound.

\begin{lemma}\label[lemma]{lem:upperbound}
$ar(G,P_k)\leq c_k|V(G)|$ for some $c_k\in \Theta(k\sqrt{\log k})$ and $c_k> 3k\sqrt{\log k}$.
\end{lemma}

\begin{proof}
    Let $c$ be a $P_k$-free coloring of $G$
    with the maximum number of colors; we take the maximum size set
    of edges of distinct colors w.r.t. $c$. The resulting graph has no
    $P_k$ as a subgraph and hence it does not have any $P_k$ as a minor so
    by Mader's theorem~\cite{Mader1967,DBLP:books/daglib/0030488}
        it has at most $c'_k|V(G)|$ edges where $c'_k \in
        O(k\sqrt{\log k})$, we set $c_k=\max\{c'_k,1+ 3k\lceil\sqrt{\log k}\rceil\}$. 
\end{proof}

\noindent 
\textbf{Assumption II: }In this section, $c_k$ is what we
  used in~\Cref{lem:upperbound}. Whenever we write $I$ it means a
  maximum independent set in the graph $G$.

\medskip
Given an undirected graph $G$, we construct a graph $G'$ as follows:

\begin{enumerate}
	\item For each $ v \in V(G) $ we introduce two new
	vertices $s_v, t_v \in V(G')$ and $\myconst c_k
	|V(G)|$ internally disjoint paths of length $k-1$, $\mathcal{P}^v = \{ P^v_1, \ldots,
	P^v_{\myconst c_k |V(G)|} \}$, connecting $s_v$ to
        $t_v$. Later in~\Cref{lem:3colors} we determine the value of $f_k$.
	\item For each edge $\{v, u\} \in E(G)$, add $4$ new edges in $E(G')$:
	$\{s_v, t_u\}$, $\{t_v, s_u\}$, $\{t_v, t_u\}$, $\{s_v,
        s_u\}$. Let us define the union of all such edges in the entire graph $G'$ as $E^s_t$, more formally $E^s_t=\bigcup_{\{u,v\}\in E(G)}\{\{s_v, t_u\}, \{t_v, s_u\}, \{t_v, t_u\}, \{s_v, s_u\}\}$.
\end{enumerate}

An edge coloring is \emph{valid} if it is
a $P_k$-free coloring. We start by
providing some lemmas and observations on the structure of valid
colorings of $G'$ to establish a connection between such a coloring
and an independent set in $G$. 

\begin{lemma}\label[lemma]{lem:fewcolors}
	In any $P_k$-free coloring of $G'$ the edges in $E^s_t$ will receive
	at most $2 c_k |V(G)|$ distinct colors.
\end{lemma}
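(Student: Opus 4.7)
The plan is to mimic the argument used in Lemma~\ref{lem:upperbound} but applied to the subgraph induced by $E^s_t$ alone, exploiting the fact that this subgraph sits on only $2|V(G)|$ vertices (the vertices $\{s_v, t_v\mid v\in V(G)\}$).

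More concretely, let $c$ be any $P_k$-free coloring of $G'$, and let $C$ be the set of distinct colors appearing on the edges of $E^s_t$. For each color $\alpha \in C$, pick exactly one edge $e_\alpha \in E^s_t$ with $c(e_\alpha)=\alpha$, and let $F = \{e_\alpha \mid \alpha \in C\}$. Then $|F|=|C|$ and $F$ is a rainbow subgraph of $G'$. Since $c$ is $P_k$-free, $G'$ contains no rainbow $P_k$; because $F$ itself is rainbow, this forces $F$ to contain no $P_k$ as a subgraph.

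Now observe that every edge of $E^s_t$ is incident only to vertices of the form $s_v$ or $t_v$ for some $v\in V(G)$, so $V(F) \subseteq \{s_v, t_v \mid v \in V(G)\}$, giving $|V(F)| \leq 2|V(G)|$. Since $F$ contains no $P_k$ as a subgraph it contains no $P_k$ as a minor (because any $P_k$-minor in a graph forces the existence of a walk, and hence a simple subpath, of length at least $k$). Applying Mader's theorem exactly as in the proof of Lemma~\ref{lem:upperbound}, we obtain
\[
|C| \;=\; |E(F)| \;\leq\; c_k\cdot |V(F)| \;\leq\; 2\,c_k\,|V(G)|,
\]
which is the desired bound.

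The argument is essentially a localization of Lemma~\ref{lem:upperbound} to the cross-edge subgraph, and there is no real obstacle once one notices that picking one representative per color produces a rainbow, hence $P_k$-subgraph-free, subgraph whose vertex set is naturally bounded by $2|V(G)|$. The only subtle point to record explicitly is the implication ``no $P_k$-subgraph $\Rightarrow$ no $P_k$-minor'' for paths, which justifies using Mader's bound; this is immediate because a $P_k$-minor yields $k+1$ pairwise disjoint connected bags joined consecutively by edges, from which a simple path of length at least $k$ can be extracted.
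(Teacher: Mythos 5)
Your proof is correct and takes essentially the same approach as the paper: the paper's one-line proof applies Lemma~\ref{lem:upperbound} to the induced subgraph on the $2|V(G)|$ vertices $\{s_v,t_v\}$, whereas you unroll that lemma's argument (pick one representative edge per color, note the rainbow subgraph is $P_k$-subgraph-free hence $P_k$-minor-free, invoke Mader) specialized to the cross-edge subgraph. Same key observation, same bound.
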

\begin{proof}
    The subgraph of $G'$ induced on endpoints of edges in $E^s_t$ has exactly $2|V(G)|$ vertices hence the lemma follows from~\Cref{lem:upperbound}.
\end{proof}

\begin{lemma}\label[lemma]{lem:valid C4 coloring}
    If $G$ is a cycle of length $2 (k-1)$ then $ar(G,P_k)=2 (k-2)$.
\end{lemma}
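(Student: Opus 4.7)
The plan is to prove matching upper and lower bounds on $ar(G,P_k)$ for the cycle $G$ of length $n=2(k-1)$. Label the edges cyclically $e_0,e_1,\ldots,e_{n-1}$ (indices mod $n$). Since $G$ has no chords, every copy of $P_k$ in $G$ is exactly an arc of $k$ consecutive edges, so there are exactly $n$ copies of $P_k$, and a coloring is $P_k$-free iff every such $k$-arc contains two edges of the same color. I will assume $k \geq 3$ (the claim is vacuous for smaller $k$).

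For the lower bound $ar(G,P_k) \geq 2(k-2)$, I would give an explicit coloring: assign a shared color $A$ to $\{e_0,e_1\}$, a shared color $B$ to $\{e_{k-1},e_k\}$, and a fresh distinct color to every other edge, for a total of $2+(n-4)=2(k-2)$ colors. To verify $P_k$-freeness, observe that the complement (in $G$) of any $k$-arc is an arc of exactly $k-2$ consecutive edges, which can therefore contain only pairs of edges whose cyclic distance is at most $k-3$. A direct inspection shows that every pair formed by picking one edge from $\{e_0,e_1\}$ and one edge from $\{e_{k-1},e_k\}$ has cyclic distance at least $k-2$, so no $(k-2)$-arc can simultaneously contain an edge of each pair. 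Hence every $k$-arc fully contains at least one of the monochromatic pairs and is not rainbow.

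For the upper bound $ar(G,P_k) \leq 2(k-2)$, suppose towards a contradiction that a $P_k$-free coloring uses $c \geq 2k-3$ colors, and let $a_1,\ldots,a_c$ be the sizes of the color classes. Then the total excess is $\sum_i(a_i-1)=n-c\leq 1$. If the excess is $0$, the coloring is rainbow and every $k$-arc is a rainbow $P_k$, a contradiction. If the excess is $1$, exactly one pair of edges $\{e,e'\}$ shares a color and every other edge is uniquely colored, so every one of the $n$ cyclic $k$-arcs must contain both $e$ and $e'$. The edges $e,e'$ split the remaining $n-2=2k-4$ edges into two cyclic gaps of sizes $g_1,g_2$ with $g_1+g_2=2k-4$, and a $k$-arc contains both $e$ and $e'$ iff its complement $(k-2)$-arc lies entirely inside one of these two gaps. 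The number of such placements is $\max(0,g_1-k+3)+\max(0,g_2-k+3)$, and a short case split on the partition of $2k-4$ between $g_1$ and $g_2$ shows that this quantity is at most $k-1$, which is strictly less than $n=2k-2$. Hence some $k$-arc is rainbow, contradicting the assumption.

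The only real obstacle is keeping the bookkeeping of cyclic arcs and their complements straight; once the bijection between $k$-arcs and their $(k-2)$-complements is in place, both bounds reduce to elementary arithmetic on arc lengths.
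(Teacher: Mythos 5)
Your proof is correct, and the lower bound is essentially the paper's coloring in disguise: assigning color $A$ to $\{e_0,e_1\}$ and $B$ to $\{e_{k-1},e_k\}$ places the two monochromatic adjacent pairs at antipodal positions with both gaps of size $k-3$, exactly as the paper's coloring does with $\{e_t,e_{t+1}\}$ and $\{e'_t,e'_{t+1}\}$ (the paper just phrases it via the two internally disjoint $s$--$v$ paths of length $k-1$, which requires $k-1$ even, i.e., $k$ odd, whereas your version does not). Your upper bound, however, is a genuinely different argument. The paper fixes an antipodal vertex pair $s,v$, observes that one of the two arcs $P,P'$ of length $k-1$ between them must be rainbow if there were $>2(k-2)$ colors, and then argues the two edges of $P'$ touching $P$ must reuse colors of $P$, capping $P'$ at $k-3$ fresh colors. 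You instead run a global deficiency count: with $c\ge 2k-3$ colors and $n=2k-2$ edges the total excess $\sum_i(a_i-1)=n-c\le 1$, so at most one pair of edges shares a color, and you then count that at most $k-1$ of the $n$ cyclic $k$-arcs can contain both members of that pair, leaving a rainbow arc. Both are short; your counting argument is arguably more modular (it makes the role of $n=2(k-1)$ transparent and doesn't need to single out a vertex pair), while the paper's argument is more local and mirrors the structure of its lower-bound coloring. One very minor gap: you should state that the case "both gaps $<k-3$" is impossible (since $g_1+g_2=2k-4>2(k-4)$), so the case split on $(g_1,g_2)$ really does give the bound $\max(2,k-1)=k-1$ for $k\ge 3$; as written you only assert the outcome of the case split.
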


\begin{proof}
    First of all, we provide a coloring scheme for a cycle of
        length $2 (k-1)$ with $2 (k-2)$ distinct
        colors. Consider two vertices $s,v$ of this cycle which are within
        distance $k-1$ from each other. There are two internally vertex disjoint paths
        $P$ and $P'$ each of length $k-1$ between $s$ and $v$. Recall
        that $k>1$ is an odd number hence $k-1=2t, t>0$. Let suppose
        the edges of $P$ and $P'$ are $e_1,\ldots,e_{2t}$ and $e'_1,\ldots,e'_{2t}$
        respectively, w.r.t.\ their order of appearance from
        $s$ to $t$. We define a coloring function $c$ as follows. 
    
    \begin{align*}
    c\colon\begin{cases}
    c(e_i)=i,c(e'_i)=i+k-1, \hspace{10mm} \text{ if } i\neq t \text{ and }i\neq t+1, \\
    c(e_i)=t,c(e'_i)=t+k-1, \hspace{11mm} otherwise.
    \end{cases}
    \end{align*}

    $c$ colors the graph with $2(k-3) + 2 = 2(k-2)$ colors. On the other hand, every path of length $k$ contains either both of $e_t,e_{t+1}$  or both of $e'_t,e'_{t+1}$, hence, as such pairs have the same color, every path of length $k$ will have at most $k-1$ distinct colors. Thus $c$ defines a $P_k$-free coloring on the cycle of length $2(k-1)$.
    
    Now we prove by contradiction that $ar(G,P_k) \leq 2 (k-2)$. Assume there is a coloring with more than $2 (k-2)$ distinct colors. Hence either $P$ or $P'$ has $(k-1)$ distinct colors, let's say it is $P$, then two edges of $P'$ that are incident to end points of $P$ should be colored by one of the colors that is already in color set of $P$. So $P'$ has at most $(k-1)-2$ colors that are not in the color set of $P$, hence we have at most $(k-1) + (k-1-2) = 2 (k-2)$ distinct colors.
\end{proof}

\begin{lemma}\label[lemma]{lem:validn3coloring}
    Let $H$ be a graph isomorphic to 
        $\mathcal{P}^v$ for any $v\in V(G)$. Then there is a valid
        coloring of $H$ with $(k-2) \cdot \myconst c_k |V(G)|$ distinct colors.
\end{lemma}
\begin{proof}
Color each path of $H$ with $k-2$ colors: color two middle
edges of the path $P_i$ by color $i$, color the rest of edges by
colors $k i+j$ for
$j\in[k-3]$. By~\Cref{lem:valid C4 coloring} this is a valid
coloring. The number of distinct colors follows from the number of
paths and the number of distinct colors of each path. Such a scheme is
depicted in the lower set of paths of the~\Cref{fig:k-2k-1} (vertex $v$).
\end{proof}

The next lemma bounds the number
of distinct colors of each individual $\mathcal{P}^v$.

\begin{lemma}\label[lemma]{lem:invalid n3+ coloring}
	There is no valid coloring of $G'$ with more than $(k-2) \cdot \myconst c_k |V(G)|$ distinct colors in one $\mathcal{P}^v$ for $v\in V(G), |V(G)|\geq 2$.
\end{lemma}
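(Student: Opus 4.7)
The plan is to reduce the claim to the already-established Lemma~\ref{lem:valid C4 coloring} on $2(k-1)$-cycles, together with a short case analysis. Since $\mathcal{P}^v$ is a subgraph of $G'$, any $P_k$-free coloring of $G'$ restricts to a $P_k$-free coloring of $\mathcal{P}^v$, so it suffices to bound the number of distinct colors appearing on the edges of $\mathcal{P}^v$.

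The structural heart of the argument is the observation that any simple $P_k$ contained in $\mathcal{P}^v$ uses edges from exactly two of the constituent paths $P^v_i$. Indeed, the only vertices shared between distinct $P^v_i$'s are $s_v$ and $t_v$, so every time the path switches from one $P^v_i$ to another it must pass through $\{s_v,t_v\}$; since a simple path visits each of $s_v,t_v$ at most once, there are at most two switches. Two switches would force the $P_k$ to traverse some $P^v_j$ entirely (contributing $k-1$ edges) and use at least one edge on each side, giving length $\geq k+1$; a single path cannot host a $P_k$ either, because $|E(P^v_i)|=k-1$. Hence every $P_k$ in $\mathcal{P}^v$ lies inside some cycle $P^v_i\cup P^v_j$ of length $2(k-1)$, and Lemma~\ref{lem:valid C4 coloring} applied to that cycle yields that the restricted coloring uses at most $2(k-2)$ colors on $E(P^v_i)\cup E(P^v_j)$.

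Writing $C_i:=c(E(P^v_i))$ and $m:=(f_k+1)c_k|V(G)|$, we thus have $|C_i|\leq k-1$ for all $i$ and $|C_i\cup C_j|\leq 2(k-2)$ for all $i\neq j$. I would finish with a case split. If some $P^v_i$ is rainbow, say $|C_1|=k-1$, then for every $j\geq 2$, $|C_j\setminus C_1|\leq 2(k-2)-(k-1)=k-3$, so
\[
\Bigl|\bigcup_{i} C_i\Bigr|\leq (k-1)+(m-1)(k-3)=m(k-3)+2\leq m(k-2),
\]
where the last inequality uses $m\geq 2$ (which follows from $|V(G)|\geq 2$). Otherwise no $P^v_i$ is rainbow, so $|C_i|\leq k-2$ for every $i$ and trivially $|\bigcup_i C_i|\leq\sum_i|C_i|\leq m(k-2)$. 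Either way, $|\bigcup_i C_i|\leq (k-2)(f_k+1)c_k|V(G)|$, as required.

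The only non-routine step is the structural claim that any $P_k$ in $\mathcal{P}^v$ is confined to two of the $P^v_i$'s; once that is in place the rest is a clean case analysis on top of Lemma~\ref{lem:valid C4 coloring}, with no new gadget or auxiliary construction needed.
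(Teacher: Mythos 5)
Your proof is correct and follows essentially the same route as the paper: apply Lemma~\ref{lem:valid C4 coloring} to each cycle $P^v_i\cup P^v_j$ and observe that once one constituent path is rainbow the others can contribute at most $k-3$ new colors, which gives the bound for $m\geq 2$; your explicit two-case split is just a rephrasing of the paper's pigeonhole step. One small remark: the structural observation that every $P_k$ inside $\mathcal{P}^v$ is confined to two of the $P^v_i$'s, while true, is not needed --- the restriction of a $P_k$-free coloring of $G'$ to the subgraph $P^v_i\cup P^v_j$ is automatically $P_k$-free on that $2(k-1)$-cycle, so Lemma~\ref{lem:valid C4 coloring} applies with no detour.
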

\begin{proof}
    For the sake of contradiction suppose there is a valid coloring of
    $G'$ so that $\mathcal{P}^v$ is colored with more than $(k-2)
        \cdot \myconst c_k |V(G)|$ distinct colors. By Pigeonhole Principle, at
    least one of the $P^v_i$'s has $(k-1)$ edges with distinct colors $c_1$,
    $\ldots$, $c_{k-1}$. By \Cref{lem:valid C4 coloring} all other edges
    in $\mathcal{P}^v$ should be colored with at most $(k-3) \cdot (\myconst c_k |V(G)| -1)$ other colors, contradicting that $\mathcal{P}^v$ has more than $(k-2) \cdot \myconst c_k |V(G)|$ distinct colors.
\end{proof}

\begin{definition}[Family of Distinct Colored Paths]\label[definition]{def:coloredpaths}
	A set of paths $\mathcal{P}$ is a family of distinct colored paths if the
	following conditions hold:
	\begin{enumerate}
		\item Their union is a graph with a valid $P_k$-free coloring.
		\item For every $P\neq Q\in \mathcal{P}$ and, for every $e\in P,e'\in
		Q$ we have that $c(e)\neq c(e')$.
	\end{enumerate}
\end{definition}

Note that from the above \Cref{def:coloredpaths}, it is clear that the set of paths should
be pairwise edge disjoint (otherwise it does not meet the second
condition), also one path may repeat some of its own colors.

The following lemma, basically states that we cannot have two adjacent
nodes $u, v$ in $G$ such that their
corresponding paths receive many distinct colors in $G'$. 
We employ this key property later in the hardness proof to
obtain an MIS based on the size of the family of distinct colored paths.

\begin{figure}
\begin{center}
 \includegraphics[scale=0.34]{./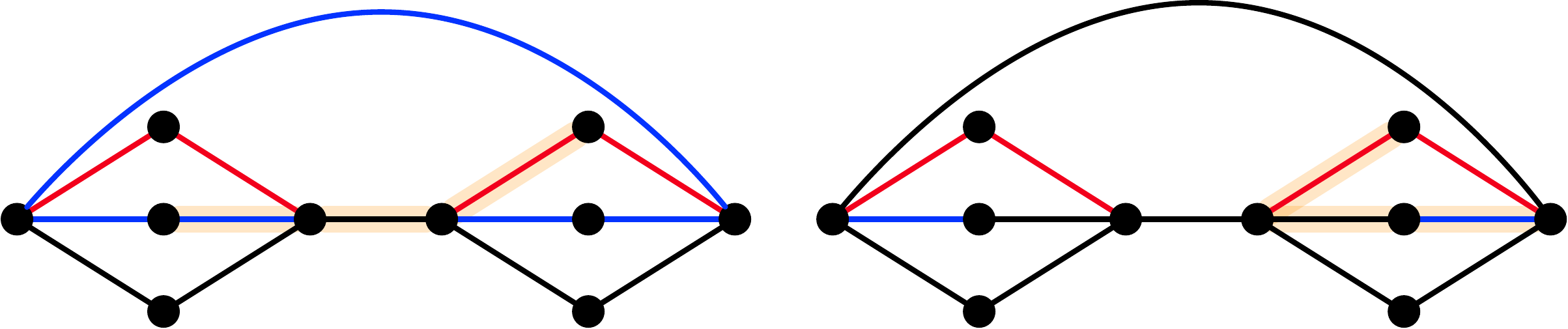}
\end{center}

    \caption{Illustration of \Cref{lem:3colors} for $P_3$. Highlighted paths
  are rainbow $P_3$'s. Neither all possible configurations are
  depicted nor all connector edges. 
No two adjacent nodes could receive three
      different colors and at the same time avoid a rainbow $P_3$.} 
      \label{fig:p3coloring}
\end{figure}

\begin{lemma}\label[lemma]{lem:3colors}
	Let $\{v, u\} \in E(G)$, then there is a constant $f_k$ (this is what we used to construct $G'$),
	depending only on $k$, such that, in any valid coloring of $G'$ if
	there are families of distinct colored paths $\mathcal{P}\subseteq
	\mathcal{P}^v, \mathcal{Q}\subseteq \mathcal{P}^u$,
	such that each $P\in\mathcal{P}\cup \mathcal{Q}$ is colored with at least
	$k-2$ distinct colors, 
	then $\min \{\sizeof{\mathcal{P}},\sizeof{\mathcal{Q}}\} < f_k$.
\end{lemma}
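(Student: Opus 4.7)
The plan is to show that when both $|\mathcal{P}|\ge 3$ and $|\mathcal{Q}|\ge 3$, we already have $\min(|\mathcal{P}|,|\mathcal{Q}|)\le (k+1)/2$, which is far smaller than $f_k=4(k^{2}+1)$; the remaining cases ($|\mathcal{P}|<3$ or $|\mathcal{Q}|<3$) trivially give $\min<f_k$.

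First, for any two distinct $P,P'\in\mathcal{P}$, the paths share only the endpoints $s_v,t_v$, so $P\cup P'$ is a cycle $C_{P,P'}$ of length $2(k-1)$ in $G'$. Its induced coloring is $P_k$-free, so Lemma~\ref{lem:valid C4 coloring} bounds the number of colors on $C_{P,P'}$ by $2(k-2)$. Since $c(P)$ and $c(P')$ are disjoint and each has size at least $k-2$, both must be exactly $k-2$. Hence every $P\in\mathcal{P}$ is non-rainbow with one color repeated on two edges; denote those two positions along $P$ from $s_v$ by $a_P<b_P$, and define $a_Q<b_Q$ analogously for $Q\in\mathcal{Q}$ measured from $s_u$.

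Next I would slide a window of $k$ consecutive edges around $C_{P,P'}$: each window is a $P_k$ in $G'$ and hence non-rainbow, and since $c(P)\cap c(P')=\emptyset$ its repetition must live in either its $P$-part or its $P'$-part. A direct case check over the two traversal directions produces the inequalities $a_P+a_{P'}\ge k-1$ and $b_P+b_{P'}\le k+1$ for all distinct $P,P'\in\mathcal{P}$. Subtracting and using $b_P>a_P$ (integers) forces $b_P=a_P+1$, and plugging back gives $a_P+a_{P'}=k-1$. Applied to any three paths of $\mathcal{P}$ (available since $|\mathcal{P}|\ge 3$), these equalities pin down $a_P=(k-1)/2$ for every $P$, using that $k$ is odd. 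Symmetrically every $Q\in\mathcal{Q}$ has its repetition at positions $((k-1)/2,(k+1)/2)$ from $s_u$.

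Finally I would pick any cross edge, say $e=\{s_v,s_u\}\in E^s_t$. For each pair $(P,Q)$, concatenating the $(k-1)/2$ edges of $P$ closest to $s_v$, then $e$, then the $(k-1)/2$ edges of $Q$ closest to $s_u$ produces a $P_k$. Both prefixes use $(k-1)/2$ distinct colors (the repeated position $(k-1)/2$ is present but its partner $(k+1)/2$ is absent), so non-rainbowness forces at least one of: (i) a color of $P$'s prefix coincides with one of $Q$'s prefix; (ii) $c(e)$ equals a color of $P$'s prefix; or (iii) $c(e)$ equals a color of $Q$'s prefix. Disjointness of colors inside $\mathcal{P}$ lets at most one $P$ satisfy (ii), and for any fixed $Q$ at most $(k-1)/2$ values of $P$ can satisfy (i), since the $(k-1)/2$ colors of $Q$'s prefix each live in at most one $P$'s color set. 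At most one $Q$ satisfies (iii), so picking any other $Q$ (available since $|\mathcal{Q}|\ge 2$) forces $|\mathcal{P}|\le (k+1)/2$, completing the argument.

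The main obstacle is the window enumeration in the second paragraph: verifying that the two inequalities on $(a_P,a_{P'},b_P,b_{P'})$ are exactly the constraints imposed by $P_k$-freeness on the cycle, across all starting positions and both cyclic directions. Once the middle-repetition structure is established, the last paragraph is a short pigeonhole on a single cross edge, which is why the choice $f_k=4(k^{2}+1)$ in the construction is amply sufficient.
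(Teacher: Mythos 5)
Your proof is correct, and it takes a genuinely different and sharper route than the paper's. The paper's argument is a coarse double pigeonhole: each $P$ has $\ge k-2$ colors, hence at most one repeated pair, so at least one of its two $(k-1)/2$-halves (at $s_v$ or $t_v$) is rainbow; with $f_k=4(k^2+1)$ paths on each side, some cross edge $\{a,b\}$ with $a\in\{s_v,t_v\}$, $b\in\{s_u,t_u\}$ has at least $k^2+1$ rainbow halves ending at $a$ and at $b$; discard the at most one half per side containing $c(\{a,b\})$, and since each remaining half has only $(k-1)/2$ colors while the halves on each side are pairwise color-disjoint, one can pick a compatible pair whose concatenation with the cross edge is a rainbow $P_k$. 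Your approach first pins down the structure exactly: via Lemma~\ref{lem:valid C4 coloring} and the sliding-window count on $C_{P,P'}$ you derive $a_P+a_{P'}\ge k-1$ and $b_P+b_{P'}\le k+1$ (which I verified: the windows through $t_v$ restrict to $P$-positions $[\ell,k-1]$ and $P'$-positions $[k-\ell,k-1]$, giving the first inequality, and the windows through $s_v$ give the second), and with three paths this forces every repeat to sit at positions $((k-1)/2,(k+1)/2)$ on \emph{every} path. Then a single pigeonhole on the fixed cross edge $\{s_v,s_u\}$ yields $\min(|\mathcal P|,|\mathcal Q|)\le (k+1)/2$. Your version is tighter — it shows $f_k$ could be taken as small as roughly $(k+3)/2$ rather than $4(k^2+1)$, which would shrink the blow-up in the construction of $G'$ — at the price of the somewhat delicate window enumeration; the paper's version avoids that enumeration entirely by only using the existence of one rainbow half per path, which is why it needs the generous quadratic $f_k$.
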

\begin{proof}
    We set $f_k = k+2$ and prove the lemma by contradiction. 
    Suppose $\sizeof{\mathcal{P}},\sizeof{\mathcal{Q}} \ge
        f_k$. Each path $P$ in
        $\mathcal{P}$ has at least $k-2$ distinct
        colors, if we divide it into two equal sized subpaths, one of them
        is rainbow. There are $k+2$ such subpaths in
        $\mathcal{P}$, so w.l.o.g.\ at least half of
        them are incident to $s_v$.
        Let us call them the set 
        $\mathcal{P}' = \{P_1,\ldots,P_t\}$, where
        $t=\lceil{\frac{k+2}{2}}\rceil=\frac{k-1}{2}+2$. Similarly there are $t$ rainbow subpaths $\mathcal{Q}' =
    \{Q_1,\ldots,Q_t\}$ in $\mathcal{P}^u$ such that w.l.o.g.\ they have
        $s_u$ as one of their endpoints and length of each of them is
        $\frac{k-1}{2}$.
    
    Let $c_1$ be the color of the edge $\{s_u,s_v\}$. As both of
    $\mathcal{P},\mathcal{Q}$ are families of distinct color paths, the same
    holds for $\mathcal{P}', \mathcal{Q}'$. Then, we have at least $t-1$ paths
    $\mathcal{P}''\subseteq \mathcal{P}'$ and at least $t-1$ paths
    $\mathcal{Q}''\subseteq \mathcal{Q}'$ such that none of
    their edges are colored with $c_1$. 
        Length of a path $P\in\mathcal{P}''$ is $t-2$ so it can
        have common color with at most $t-2$ paths in $\mathcal{Q}''$.
        Hence, as the number of paths in $\mathcal{Q}''$ is $t-1$, there are at least two paths $P\in \mathcal{P}'', Q\in \mathcal{Q}''$
    such that union of $P$ and $Q$ and $\{s_u,s_v\}$ is a rainbow
        path of length $k$.
A contradiction to the assumption of the lemma, hence, the claim of the lemma follows.
     
\end{proof}

For a better understanding of the above lemma
see~\Cref{fig:k-2k-1}. The following
establishes a lower bound on the number of distinct colors w.r.t.\ the size
of a maximum independent set $I$.

\begin{figure}
\begin{center}
 \includegraphics[scale=0.28]{./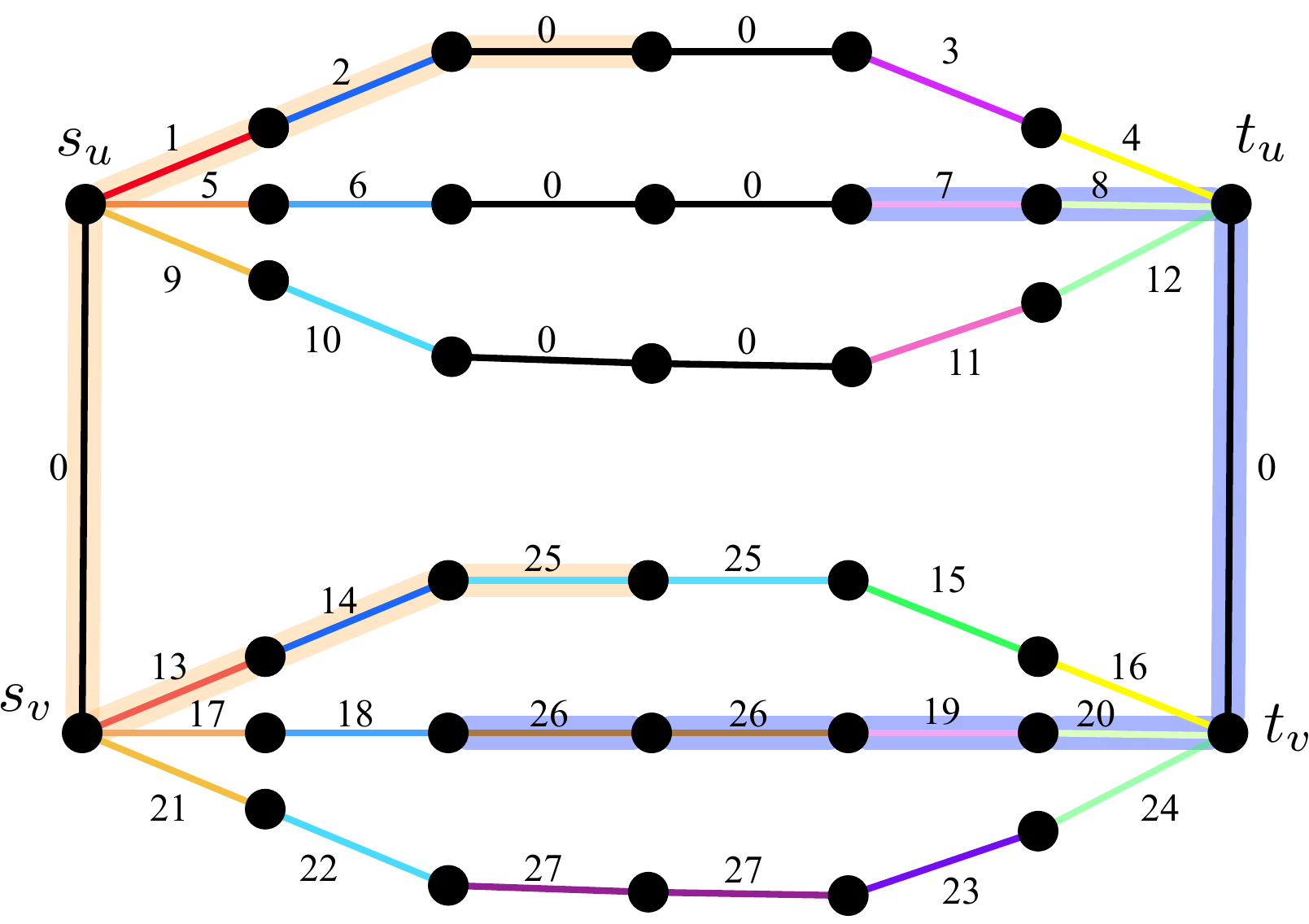}
\end{center}
    \caption{The coloring scheme of vertex gadgets for $P_7$-free
      coloring. Colors
      are represented by numbers. To simplify the visualization,
      some connector edges and some parallel paths are not drawn. For
      $v\in I$ each path gets $k-2=5$ colors and for $u\in V\setminus I$ each
      path gets $k-3=4$ colors. Two paths of length $7$ are
      highlighted, neither of them are rainbow.} 
      \label{fig:k-2k-1}
\end{figure}

\begin{lemma}\label[lemma]{lem:lowerbound}
	$ar(G',P_k) > |I| (k-2) \myconst c_k |V(G)| +
        (|V(G)|-|I|) (k-3) \myconst c_k |V(G)|$
\end{lemma}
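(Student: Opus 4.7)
The plan is to construct an explicit $P_k$-free coloring of $G'$ witnessing the claimed lower bound. Write $N = (f_k+1)c_k|V(G)|$ for the number of paths comprising each $\mathcal{P}^v$. We assign pairwise disjoint fresh palettes to the $\mathcal{P}^v$'s: a palette of size $(k-2)N$ when $v \in I$ and of size $(k-3)N$ when $v \notin I$. Inside $\mathcal{P}^v$ the $N$ paths are paired into $N/2$ copies of $C_{2(k-1)}$, each colored by the scheme of Lemma~\ref{lem:valid C4 coloring} when $v \in I$ (so every constituent path uses exactly $k-2$ distinct colors, with its two middle edges sharing one color), and by a slightly weaker scheme when $v \notin I$ (every constituent path uses exactly $k-3$ distinct colors; the two shared pairs are placed symmetrically near the two endpoints of each path, for instance the pair of edges at positions $\{1,2\}$ together with the pair at positions $\{k-2,k-1\}$, in addition to identifying one of them with the middle pair). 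The boundary case $k=3$, where $(k-3)N = 0$, is handled by simply recycling a single existing color on all of $\mathcal{P}^v$ when $v \notin I$. By (the proof of) Observation~\ref{obs:valid n3 coloring} combined with Lemma~\ref{lem:valid C4 coloring}, each of these inner colorings is $P_k$-free on its own $\mathcal{P}^v$.

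Next we color $E^s_t$, exploiting independence of $I$: for every $\{u,v\} \in E(G)$ at least one endpoint lies outside $I$, so at least one of $\mathcal{P}^u,\mathcal{P}^v$ carries the weaker coloring. Color each of the four $E^s_t$ edges between $\{s_u,t_u\}$ and $\{s_v,t_v\}$ by repeating a color from one of the shared pairs on the weaker side, chosen so that every $P_k$ that uses this edge and extends into the weaker-side $\mathcal{P}$ forces the corresponding shared pair to lie inside the subpath. Finally, we give a single carefully chosen $E^s_t$ edge a brand-new color; this furnishes the strict inequality, producing strictly more than $|I|(k-2)N + (|V(G)|-|I|)(k-3)N$ colors in total.

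Verifying $P_k$-freeness proceeds by case analysis. A $P_k$ contained in a single $\mathcal{P}^v$ is handled by the inner construction. A $P_k$ crossing between $\mathcal{P}^u$ and $\mathcal{P}^v$ via a single $E^s_t$ edge decomposes as a subpath of length $a$ in $\mathcal{P}^u$, that edge, and a subpath of length $b = k-1-a$ in $\mathcal{P}^v$; the symmetric placement of shared pairs near the endpoints $s_v,t_v$ of the weaker side ensures that any subpath of length at least two from $s_v$ or $t_v$ already contains a shared pair, and this together with our choice of the $E^s_t$ color forces a color repetition inside the $P_k$. A $P_k$ traversing two or more $E^s_t$ edges contains even shorter segments of each $\mathcal{P}^v$ and is treated similarly, using that consecutive $E^s_t$ edges at a common $s$- or $t$-vertex must also reuse the colors prescribed above. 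The main obstacle is the tight case analysis around the balanced split $a = b = (k-1)/2$, and it is precisely the $(k-2)N$-versus-$(k-3)N$ gap between the two palettes that supplies the slack needed to make the bookkeeping work out, while the value $f_k = 4(k^2+1)$ from Lemma~\ref{lem:3colors} remains comfortably consistent with this construction.
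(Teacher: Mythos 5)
Your overall strategy---an explicit coloring with pairwise disjoint palettes of size $(k-2)N$ on the gadgets of independent-set vertices and $(k-3)N$ elsewhere, where $N=(f_k+1)c_k|V(G)|$, plus one extra color to obtain the strict inequality---is the same as the paper's. The gap is in the key design step: how $E^s_t$ and the ``saved'' edges of the weak gadgets are colored. You color an $E^s_t$ edge ``by repeating a color from one of the shared pairs on the weaker side, chosen so that \dots the corresponding shared pair lies inside the subpath.'' This cannot be arranged: $\mathcal{P}^u$ consists of $N$ internally disjoint paths whose shared-pair colors are pairwise distinct, and a $P_k$ entering at $s_u$ may continue into any one of them, so no single color on $\{s_v,s_u\}$ can match the pair of whichever $P^u_j$ is actually used. (Your placement of monochromatic pairs at positions $\{1,2\}$ and $\{k-2,k-1\}$ does rescue single-crossing paths whose weak-side segment has length at least $2$, independently of the $E^s_t$ color---but that is a different mechanism from the one you state.) The construction genuinely fails on paths traversing two $E^s_t$ edges: for $k=5$, take $v,w\in I$ and $u\notin I$, and the $P_5$ consisting of the edges at positions $2$ and $1$ of some $P^v_i$ ending at $s_v$ (distinct colors, since for $v\in I$ only the middle pair is identified), then $\{s_v,s_u\}$ and $\{s_u,t_w\}$ (which under your rule may receive two different shared-pair colors of $\mathcal{P}^u$), then one edge of some $P^w_j$ at $t_w$---five distinct colors. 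Your remark that consecutive $E^s_t$ edges ``must also reuse the colors prescribed above'' is an assertion, not something your rule enforces. Finally, for $k=3$ the ``brand-new color on one $E^s_t$ edge'' together with an unspecified ``recycled'' color on the weak gadgets yields a rainbow $P_3$ of the form (fresh edge in $\mathcal{P}^v$, newly colored $E^s_t$ edge, recycled edge in $\mathcal{P}^u$) unless the recycled color is forced to equal the $E^s_t$ color.

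The paper's coloring sidesteps all of this with one \emph{global} color $c$: every edge of $E^s_t$, and the two middle edges of every $P^v_i$ with $v\notin I$, receive the same color $c$, while the remaining $k-3$ edges of each weak path get fresh colors. Then any $P_k$ meeting $E^s_t$ either uses two $E^s_t$ edges (two copies of $c$), or uses one $E^s_t$ edge together with a segment of length at least $(k+1)/2$ on one side of it; that segment lies in a single $P^v_i$ and contains both of its middle edges, which are equal in color whether $v\in I$ (the cycle coloring of Lemma~\ref{lem:valid C4 coloring}) or $v\notin I$ (both colored $c$). Paths confined to one $\mathcal{P}^v$ are handled the same way, and $c$ itself is the extra color that makes the inequality strict. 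To repair your proof, make the repeated color global rather than local to each gadget; the per-gadget bookkeeping you attempt around the balanced split is then unnecessary.
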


\begin{proof}
    For $v \in I$ color $\mathcal{P}^v$ with $(k-2) \myconst
        c_k |V(G)|$ distinct colors as described in the proof of~\Cref{lem:validn3coloring} and, for $v \notin I$ color
        $\mathcal{P}^v$ with $(k-3) \myconst c_k |V(G)|$
        new distinct colors as follows: color two middle edges in each
        $P^v_i$ with a same color $c$, assign new distinct colors to
        rest of edges of $P^v_i$. At the end color all other edges of $G'$, i.e.\ all edges in $E^s_t$ with
        the same color $c$. It is easy to see that the above coloring uses $|I| (k-2) \myconst c_k |V(G)| +
        (|V(G)|-|I|) (k-3) \myconst c_k |V(G)| + 1$ distinct colors. In the rest it is enough to show that the mentioned coloring is $P_k$-free.

By~\Cref{lem:validn3coloring} we do
        not have a rainbow $P_k$ in $\mathcal{P}^v$ for any $v \in I$. Also, we do not have a rainbow $P_k$ in $\mathcal{P}^v$ for any $v \notin I$, since every $P_k$ must contain two middle edges of one of the $P^v \in \mathcal{P}^v$ which have the same color $c$. 
        Let suppose $P$ is a path of length $k$ that is not entirely
        in $\mathcal{P}^v$, for any $v\in V(G)$. Thus, $P$ has at
        least an edge of $E^s_t$. If $P$ contains more than one edge
        of $E^s_t$ then it is not a rainbow, since all edges of
        $E^s_t$ have the same color $c$. Hence, $P$ contains exactly
        one edge $e$ of $E^s_t$ connecting paths in $\mathcal{P}^u$
        and paths in $\mathcal{P}^v$. If $P$ contains two middle edges
        of $P^v \in \mathcal{P}^v$ (similarly $P^u\in\mathcal{P}^u$)
        we are done: these middle edges have the same color so $P$ is
        not a rainbow path. Otherwise, $P$ has exactly one middle edge
        $e'$ of $P^u$ and a middle edge $e''$ of $P^v$; we know that
        one of $u$ or $v$ is not in $I$, hence one of $e'$ or $e''$
        has the color $c$, the same color as $e$. Therefore, $P$ is
        not a rainbow path. See~\Cref{fig:k-2k-1} for an illustration of the explained coloring in the proof.
    
\end{proof}

Now we can prove the hardness for every odd $k > 1$. 

\begin{lemma}\label[lemma]{lem:hardness_odd}
	For every odd $k > 1$, \coloring is \nph.
\end{lemma}

\begin{proof}
    We know that the maximum independent set problem is
        \nph. We show that we can find the maximum independent set
        of $G$ if we have a maximum $P_k$-free coloring of $G'$.    

    By \Cref{lem:lowerbound}, we know that we can color the
        graph with at least 
    \label{eq:1}\begin{align*}
    A = |I| (k-2) \myconst c_k |V(G)| + (|V(G)|-|I|) \cdot (k-3) \myconst c_k |V(G)| \text{ colors.}
    \end{align*}
    
Let $c$ be a maximum $P_k$-free coloring of $G'$, by the above
argument $c$ contains at least $A$ distinct colors. Let us define a set $X= \{ v \in G \mid \mathcal{P}^v \text{ has
  more than } (k-3) \myconst c_k |V(G)|   + 2f_k  \text{ distinct colors}\}$. 
The following claim enables us to employ~\Cref{lem:3colors} and relate the size of $X$ to the size of $I$ and therefore conclude the lemma.

\begin{claim}\label{clm:a}
For a $v\in X$, $\mathcal{P}^v$ has at least $f_k$ paths such that they form a family of distinct color paths and in addition each of these paths has at least $k-2$ distinct colors.
\end{claim}
\begin{ClaimProof}\textit{of Claim \ref{clm:a}.}
Let $v\in X$, define a set $Y=\emptyset$, add paths of $\mathcal{P}^v$
to $Y$ as follows: In the iteration $i$ take a path $P_i \in
\mathcal{P}^v$ which satisfies the following two conditions: $1)$ it
has maximum number of distinct colors, $2)$ it has at least $k-2$
distinct colors w.r.t.\ the colors of paths that are already in $Y$. 

We claim the size of $Y$ is at least $f_k$. Since there
are at most $|Y| (k-1)$ distinct colors in the set of edges in
$Y$, and given~\Cref{lem:3colors} there are at most $(k-3)\sizeof{\mathcal{P}^v-Y} = (k-3)
(\myconst c_k |V(G)| - |Y|)$ distinct colors in the remaining paths of
$\mathcal{P}^v-Y$. Adding the two together we get that the number of distinct
colors in $\mathcal{P}^v$ is bounded above by $(k-3)
\cdot (\myconst c_k |V(G)| - |Y|) + |Y|\cdot (k-1) = (k-3) (\myconst
c_k |V(G)|) + 2|Y|$ which if $|Y| < f_k$ it is less than $(k-3) \myconst c_k
|V(G)|   + 2f_k + 1$, this is a contradiction to the choice of $v$ and
in general $X$, therefore $Y$ does contain at least $f_k$ paths.

In the remaining, we claim $Y$ is a family of distinct colored
paths. Given the size of $Y$, it is enough to show that no two
distinct paths in $Y$ share the same color for any pair of their edges.

For
the sake of contradiction, suppose there are two paths, let say
$P_i,P_j\in Y, i<j$, such that there are
edges $e\in P_i,e'\in P_j$ with $c(e)=c(e')$. As $i<j$ we know that
there is at most one edge in $P_j$ of color $c(e)$ otherwise $P_j$
would have at most $k-3$ distinct colors w.r.t.\ $P_i$ which violates
the our way of construction of $Y$. On the other hand $P_j$ is
rainbow: otherwise if there are two edges $e_1,e_2\in P_j$ of the same
color, then given that the color of $e'$ is already in $P_i$, $P_j$
will have at most $k-3$ distinct colors w.r.t.\ $P_i$, a contradiction
to the construction of $Y$. Hence, $P_j$ is a rainbow path of length
$k-1$.

Now we perform a case distinction on colors of
$P_i=(\{e_1,\ldots,e_{k-1})$ to show that we could not have such a
$P_k$-free coloring and at the same time have both $P_i,P_j\in Y$ to
complete the contradiction. 
\begin{enumerate} 
\item Either $c(e_1)\neq c(e)$ or $c(e_{k-1})\neq c(e)$, let suppose
  the former. Then by concatenating $e_1$ and $P_j$ we get a rainbow
  path of length $k$ a contradiction to the fact that the original
  given coloring was a $P_k$-free coloring.
\item The only remaining case is that $c(e_1)=c(e_{k-1})=c(e)=c(e')$,
  and all other edges of $P_i,P_j$ will receive distinct colors. But
  this is not possible, because at the iteration $i$ we would have
  chosen $P_j$ over $P_i$ as it has more distinct colors w.r.t.\
  existing elements of $Y$.
\end{enumerate}

Given the above case distinction, we conclude that $Y$ is a family of
distinct colored paths as claimed.
\end{ClaimProof}

By the above claim and~\Cref{lem:3colors}, we know that $X$ is an independent
        set so $|X|\le |I|$. 
    
    In the rest of the proof, we show $|X| = |I|$, by the above it is
        enough to show that $|X|\ge |I|$. Note that it is easy to find $X$ once the coloring is given, so the lemma follows. 

    To aim a contradiction assume $|X| <  |I|$. We calculate the number of distinct colors w.r.t. $c$ and prove that it is less than $A$, a contradiction to the fact that $c$ has at least $A$ distinct colors.
    
We count the maximum number of possible colors in $c$ based on type of
edges of $G'$:
\begin{enumerate}
\item Edges in $\mathcal{P}^v$ for $v\in X$:
    by~\Cref{lem:invalid n3+ coloring} we have at
    most $B=(k-2)\myconst c_k |V(G)| \cdot |X|$ distinct colors for such edges,
\item Edges in $\mathcal{P}^v$ for $v\in V(G)-X$: at most
    \begin{align*}C=((k-3)(\myconst c_k |V(G)|)+ 2f_k)  (|V(G)| - |X|)\end{align*} 
    distinct colors for them,
\item For the remaining edges, i.e. edges in $E^s_t$: by~\Cref{lem:fewcolors} we have at most $D=2 c_k |V(G)|$ distinct colors for them.
\end{enumerate}
    
    So to arrive at a contradiction, we just need to prove
        $B+C+D<A$, if we put the numbers together we will get: $|V(G)|
        - |X| \leq  c_k |V(G)|$, but, this inequality holds for $k\le
        n, c_k\ge 1$, which concludes the contradiction as it shows $B+C+D<A$.

\begin{align*}
&(k-2) \myconst c_k |V(G)| \cdot |X| +\\ 
&((k-3)  (\myconst c_k |V(G)|)+ 2f_k) \cdot (|V(G)| - |X|) + 2 c_k |V(G)|\\
&< (k-2) \myconst c_k |V(G)| \cdot |I| + 
((k-3) (\myconst c_k |V(G)|)) \cdot (|V(G)| - |I|) +1\\
&\Leftrightarrow 0<(k-2) \myconst c_k |V(G)|\cdot(|I|-|X|)\\
&-(k-3) \myconst c_k |V(G)|(|I|-|X|)\\
&-2f_k(k-3)(|V(G))|-|X|)-2c_k |V(G)|+1\\
&\xLeftarrow{|I|-|X|\ge 1} 0<\myconst c_k |V(G)|-2f_k(k-3)(|V(G)| - |X|)-2c_k|V(G)|+1\\
&\xLeftarrow{|V(G)| - |X| \le |V(G)|} 0<\myconst c_k |V(G)|-2f_k(k-3)|V(G)|-2c_k|V(G)|+1\\
&\Leftarrow 0 < (f_k - 1)c_k|V(G)| - 2f_k(k-3)|V(G)| + 1\\
&\xLeftarrow{f_k = k+2, c_k \ge 3k} 0 < (3(k+1)k - 2(k+2)(k-3)) |V(G)| + 1 \\
&\xLeftarrow{K\ge 3} 0<(k^2+5k+12) |V(G)| + 1
\end{align*}

\end{proof}

\subsection*{Hardness of the Problem for Even Values of
  $k>4$}\label{apx:even}
\noindent 
\textbf{Assumption:} In this part we assume $k = 2t, t>2$.
\begin{definition}[\sd{d}]\label[definition]{def:twostar}
For an integer $d \geq 1$, let \sd{d} be a subdivided star, i.e.,
\sd{d} is obtained by subdividing every edge of $K_{1, d}$. We call
the corresponding vertex of $K_{1, d}$ in the partition with size one, as the
center of \sd{d}. Every subdivided edge of $K_{1, d}$ is a
\textit{branch}. Therefore, \sd{d} has exactly $d$ branches.
\end{definition}

\begin{definition}[\wasted]\label[definition]{def:wasted}
    In a coloring of $G$, we choose one arbitrary edge from each color
    and call each unchosen edge of $G$ a \wasted. 
\end{definition}

Therefore, if $D$ is a set of all \wasteds of a maximum $H$-free coloring of $G$, then $|D| + ar(G, H) = |E(G)|$.
    
\begin{definition}[$D_{l, w}$]\label[definition]{def:pc} We construct an edge gadget $D_{l,w}$ as follows.
    Let $u_1, u_2, ..., u_{l+1}$ be $l + 1$ distinct vertices. Then
    for every $i \in [l]$, we connect $u_i$ to $u_{i+1}$ by $w$
    internally disjoint paths each of length two. 
\end{definition}

We call $u_1$ \textit{head} and $u_{l+1}$ \textit{tail} of $D_{l, w}$.

\textbf{Graph Construction}
Given a graph $G$, we construct a graph $G'$ as follows.

\begin{enumerate}
\item For each vertex $v \in V(G)$ with degree $d_v$, we add one
  \sd{d_v}, named $S_v$, to $G'$. Each branch of $S_v$ corresponds to
  one of the incident edges of $v$.
\item For every edge $e = \{u, v\} \in E(G)$, we add a $D_{t - 2,
    4|E(G)| + 8}$ to $G'$, named $D_e$,  such that its head is the
  leaf of the corresponding branch of $e$ in $S_u$ and its tail is the
  leaf of the corresponding branch of $e$ in $S_v$.
\end{enumerate}

For a better understanding of the graph construction see~\Cref{fig:p6coloring}.

\begin{figure}
\begin{center}
 \includegraphics[scale=0.28]{./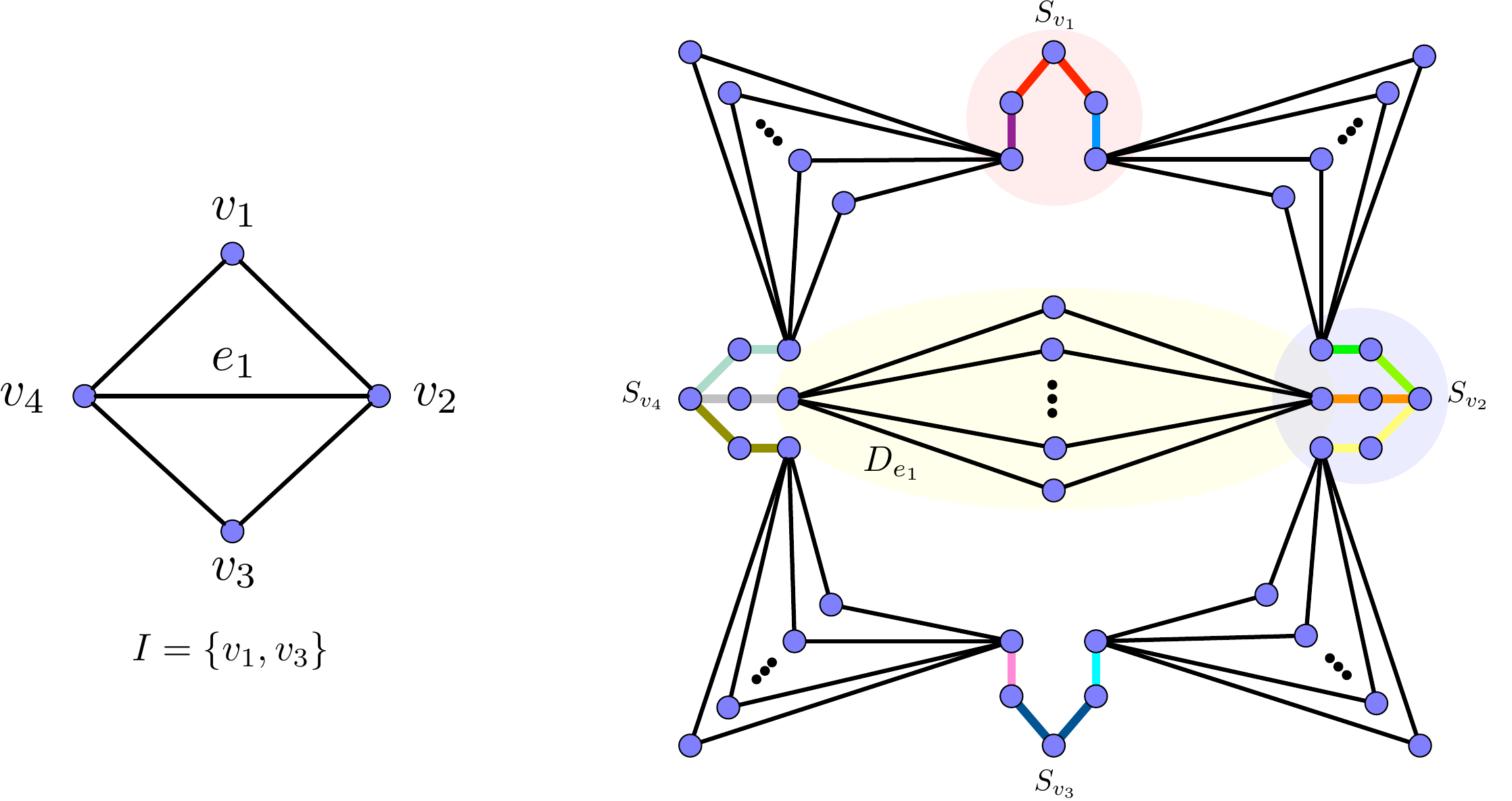}
\end{center}
    \caption{Illustration of graph construction for $k = 6$. The left figure shows the graph $G$, and the right figure shows its corresponding $G'$. All black edges of $G'$ have some unique new color. The coloring is the maximum $P_6$-free coloring of $G'$.}
      \label{fig:p6coloring}
\end{figure}

\begin{lemma}\label[lemma]{lem:eightcolor}
    In any maximum $P_k$-free coloring $c$ of $G'$, for every $D_e$, $e
    \in E(G)$, there exist at least eight edge disjoint paths, each
    of length $2t - 4$ between its head and tail such that their union
    is rainbow.
\end{lemma}

\begin{proof}
    We prove the lemma by contradiction. Suppose that there exists $e
    \in E(G)$ such that $D_e$ does not satisfy the above condition.
    There exists a set of edge disjoint paths $\mathcal{Q}$ of size
    $4|E(G)| + 8$ in $D_e$, such that each of its paths has length $2t
    - 4$ and they all start from the head of $D_e$ and end in the
    tail of $D_e$. Let $\mathcal{P}$ be the maximum size
    subset of $\mathcal{Q}$ such that union of its paths is a
    rainbow. By the assumption, we have $|\mathcal{P}| < 8$. Note that
    for every $P
    \in \mathcal{Q}\setminus \mathcal{P}$, we know that either $P$ is not a
    rainbow or $c(E(P)) \cap c(E(\mathcal{P})) \neq
    \emptyset$. Therefore, there are at least $4|E(G)| + 1$ \wasteds
    in $D_e$ w.r.t.\ $c$.
    
    To arrive at a contradiction, we construct a coloring $c'$, that
    has more distinct colors than $c$. In $c'$ we color all
    edges of $S_v$ for each $v \in V(G)$ with the same color $c_1$, and color remaining edges $e'$, i.e. $e'\in E(G')\setminus E(\bigcup_{v\in V(G)}S_v)$,
    with a new color $c_{e'}$. $c'$ has $4|E(G)| - 1$ \wasteds which
    is less than the number of \wasteds in $D_e$ w.r.t. the coloring $c$.

It is enough to show that $c'$ is a valid coloring. Every path that is
entirely in $D_e$ has length at most $2t - 2$, therefore every path of
length $2t$ has at least two edges in $\bigcup_{v \in V(G)}E(S_v)$,
hence $c'$ is a $P_k$-free coloring and has more distinct colors than
$c$, a contradiction.
\end{proof}

\begin{lemma}\label[lemma]{lem:dcoloring}
	In any maximum $P_k$-free coloring of $G'$, in each $S_v$ for
        $v \in V(G)$, there are at least $d_v - 1$
        \wasteds. 
\end{lemma}
\begin{proof}
If $d_v = 1$ the lemma is obvious. We prove the lemma for $d_v > 1$, 
by contradiction. Suppose we can color $S_v$ with at most
$d_v - 2$ \wasteds. Hence, we have at least two branches $b_1, b_2$ of
$S_v$ such that $b_1 \cup b_2$ is rainbow. Let $e \in E(G)$ be the edge
that is corresponding to branch $b_1$.  By the~\Cref{lem:eightcolor},
there are eight rainbow paths in $D_e$ each of length $2t - 4$. 
Since $|E(b_1) \cup E(b_2)| = 4$, there exists a rainbow path $Q$ in
$D_e$ such that it has no common color with $E(b_1) \cup
E(b_2)$. Therefore, concatenation of $b_2$, $b_1$ and $Q$ creates a
rainbow path of length $2t$, a contradiction. So we need at least $d_v
- 1$ \wasteds.
\end{proof}

\begin{lemma}\label[lemma]{lem:dcoloring2}
 In any maximum $P_k$-free coloring of
        $G'$, for any $v \in V(G)$ if $S_v$ has $d_v - 1$ \wasteds,
        then its coloring has the following properties:
        $1)$ all incident edges of the center vertex of $S_v$ have the same
        color and $2)$ each remaining edge of $S_v$ has a distinct color.
\end{lemma}
\begin{proof}
First we prove the following claim.
\begin{cclaim}\label{clm:b}
In any maximum $P_k$-free coloring of $G'$, for any $e\in E(G)$, there exist
four edge disjoint rainbow paths each of length
$2t - 3$ in $D_e$ such that their union is rainbow and they all start
from the head of $D_e$. Similarly, there exist
four rainbow paths each of length $2t - 3$ in
$D_e$ such that their union is rainbow and they all start from the tail of $D_e$.
\end{cclaim}
\begin{ClaimProof} \textit{of Claim \ref{clm:b}.}
By~\Cref{lem:eightcolor}, 
there exists eight disjoint paths $P_1, P_2,
\ldots, P_8$, connecting the head of $D_e$ to its tail and $P_1 \cup \cdots \cup P_8$ is rainbow. For $i \in \{1, 3, 5, 7\}$, by concatenating $P_i$ with the
starting edge of $P_{i+1}$, we get a rainbow path of length $2t - 3$
which starts from the tail of $D_e$. A similar argument holds for the
head of $D_e$ and the claim follows.
\end{ClaimProof}

Let $w$ be the center vertex of
$S_v$. We prove that in any valid coloring $c$ of $S_v$ with $d_v - 1$ \wasteds
all edges incident to $w$ have the same color.
Otherwise, there are at least two
edges $e_1,e_2$ incident to $w$ such that $c(e_1)\neq c(e_2)$.
Let suppose $e_1,e'_1$ belong to the same branch $b_e$ of $S_v$. If
$c(e'_1)\notin\{c(e_1),c(e_2)\}$ then $e_2,e_1,e'_1$ and one of the $4$ rainbow paths in
$D_e$ from the above claim, form a rainbow path of length $k$, a
contradiction. Hence, $c(e'_1)$ is used in the coloring of
incident edges of $w$. With a similar argument, every edge in $S_v$
which is not incident to $w$, has the same color as one of the incident
edges of $w$. Therefore, the total number of distinct colors in $S_v$
is at most $d_v$, a contradiction that there are at most $d_v-1$
wasted edges in $S_v$. Hence, incident edges of $w$ have
the same color; then every remaining edge must have a new
distinct color otherwise there would be more than $d_v - 1$ \wasteds in $S_v$, thus
the lemma follows.
\end{proof}

\begin{lemma}\label[lemma]{lem:adjacentcoloring}
	Let $u,v \in V(G)$ and $e = \{u, v\} \in E(G)$. In any maximum
	$P_k$-free coloring of $G'$, $S_v$ has at least $d_v$ \wasteds or
	$S_u$ has at least $d_u$ \wasteds.
\end{lemma}
\begin{proof}
Suppose that $S_u$ has $d_u - 1$ \wasteds and $S_v$ has $d_v - 1$
\wasteds. By~\Cref{lem:dcoloring}, we know that the edges in $S_u\cup S_v$ that are not connected to
the center vertices of these two subdivided stars will receive
distinct colors and those that are incident to the center vertices
will receive new colors $c_v$ (edges incident to the center of $S_v$)
and $c_u$ respectively. Then by~\Cref{lem:eightcolor}, we know that
there is a rainbow path of length $2t - 4$, $P$, between the head and
the tail of $D_e$. Let
$b_u, b_v$ be the corresponding branches of $u, v$ w.r.t.\ $e$. By
concatenation of $b_u$, $P$, and $b_v$, we get a rainbow path of
length $k=2t$, a contradiction to the assumption of the lemma.
\end{proof}

\begin{lemma}\label[lemma]{lem:relationisar}
    Let $I$ be a maximum independent set of $G$ and let $D$ be the
    set of all \wasteds in a maximum $P_k$-free coloring of $G'$, then
    $|I| = 2|E(G)| - |D|$.
\end{lemma}
\begin{proof}
We provide a coloring $c$ as follows. For every $v \in I$, color $S_v$
with $d_v - 1$ \wasteds as explained in the~\Cref{lem:dcoloring2}. For
every $u \in V(G) \setminus I$, for each branch $b$ of $S_u$, we color
both of its edges with a new color, $c_{v_{b_e}}$. For every $e \in
E(G)$, we color $D_e$ as a rainbow with new distinct colors. See~\Cref{fig:p6coloring} for a better understanding of the coloring $c$.

First, we claim that $c$ is a maximum $P_k$-free coloring of $G'$ and then
we show that  $|I|$ can be derived from the size of $c$, or equivalently from
$ar(G', P_k)$.

To show that $c$ is a $P_k$-free coloring we perform a case
distinction for every path of length $k$ in $G'$, in the following
$u,v$ are two arbitrary adjacent vertices in the graph $G$:
\begin{enumerate}
\item A path $P$ between the center of $S_u$ to the center of $S_v$
  for $\{u, v\} \in E(G)$.
\item A path $P$ that contains center of $S_v$ as one of its non-leaf vertices. 
\end{enumerate}
For the first case, as $e=\{u,v\}$ by~\Cref{lem:adjacentcoloring}
w.l.o.g.\ we can suppose $S_u$ has been
colored with at least $d_u$ \wasteds. Therefore, the first two edges
of $P$ starting from the center of $S_u$ belong to a
branch $b$ of $S_u$, have the same color
$c_{u_b}$ in $c$, so $P$ is not a rainbow path.

For the second case, the path $P$ has at least one branch, $b$, of
$S_v$ and at least one incident edge to the center of $S_v$ in another
branch $b'$ of $S_v$. Hence, if we colored $S_v$ with $d_v - 1$
\wasteds, then by~\Cref{lem:dcoloring2} two edges of $P$ that are
incident to the center of $S_v$ have the same color. Otherwise, if
$S_v$ is colored with $d_v$ \wasteds, both edges of $b$ have the  same
color $c_{v_b}$, therefore $P$ is not a rainbow path.

Now we show that $c$ is a maximum $P_k$-free coloring of
$G'$. Note that by~\Cref{lem:dcoloring}, the minimum number of wasted edges in an individual $S_v$ for $v \in V(G)$ is at least $d_v - 1$. Observe that by~\Cref{lem:adjacentcoloring}, number of $S_v$'s for
$v \in V(G)$ with $d_v - 1$ \wasteds is at most $|I|$. Moreover, in $c$, number of such $S_v$'s is exactly $|I|$ which is the maximum possible number of them. Also, for each remaining vertex, $v \in V(G)$ , $S_v$ has exactly $d_v$ \wasteds (the minimum number of possible wasted edges other than $d_v - 1$). Also, $c$ does not have any \wasted in the rest of $G'$. Therefore, $c$ has the least number of \wasteds. Hence, $c$
has the maximum number of distinct colors in any $P_k$-free coloring
of $G'$.

Total number of \wasteds in $c$ is $|D| = \sum_{v \in I}
(d_v - 1) + \sum_{v \notin I} d_v$. Hence, we get that $|I| = 2|E(G)| - |D|$ as claimed.
\end{proof}

Hence, we get the following.
\begin{lemma}\label[lemma]{lem:hardness_even}
    For every even $k > 4$, \coloring is \nph.
\end{lemma}
\begin{proof}
	By~\Cref{lem:relationisar}, we know that solving the maximum $P_k$-free coloring of $G'$ results in the size of the maximum independent set of $G$ which is \nph.
\end{proof}

\subsection*{Hardness of the Problem for $k=4$}

\textbf{Graph Construction}
Given a graph $G$, we construct a graph $G'$ as follows.

\begin{enumerate}
\item For each vertex $v \in V(G)$ with degree $d_v$, we add one \sd{d_v}, named $S_v$, to $G'$. Each branch of $S_v$ corresponds to one of the incident edges of $v$.
\item For each edge $e = \{u, v\} \in E(G)$, we merge the leaf of the
  corresponding branch of $e$ in $S_u$ with the leaf of the
  corresponding branch of $e$ in $S_v$ and call the merged vertex
  $v_e$. In addition, we add $4|E(G)| + 4$ new vertices and connect them to $v_e$. We
  call the set of edges between $v_e$ and them $L_e$.
\end{enumerate}

\begin{lemma}\label[lemma]{lem:fourcolor}
    In any maximum $P_4$-free coloring of $G'$, for each $e \in E(G)$, there exist at least four edges in $L_e$ such that their union is rainbow.
\end{lemma}

\begin{proof}
    We prove the lemma by contradiction. Suppose that in a maximum $P_4$-free coloring $c$ of $G'$ there exists $e \in E(G)$ such that $L_e$ does not contain four edges for which their union is a rainbow.
    
    Let $F \subseteq L_e$ be a maximum size subset of $L_e$ such that union of its edges is rainbow. By the assumption we have $|F| < 4$. Therefore, there are at least $4|E(G)| + 1$ \wasteds in $L_e$. 
    For the sake of a contradiction, consider a coloring of $G'$ such that each edge of $S_v$ for every $v \in V(G)$ have the same color and all other edges of $G'$ have a distinct color. The proposed coloring has exactly $4|E(G)| - 1$ \wasteds which is less than the number of \wasteds in $L_e$. Inside each $L_e$, the length of the longest path is at most $2$. Hence our coloring is a $P_4$-free coloring and has more distinct colors than $c$, a contradiction.   
\end{proof}

\begin{lemma}\label[lemma]{lem:dcoloring1}
    In any maximum $P_4$-free coloring of $G'$, in every $S_v$ for
        $v \in V(G)$, there are at least $d_v - 1$
        \wasteds.
\end{lemma}
\begin{proof}
If $d_v = 1$ the lemma is obvious, therefore we prove the lemma for $d_v > 1$. For the sake of contradiction, suppose we can color $S_v$
with at most $d_v - 2$ \wasteds. Hence, we have at least two branches
$b_1, b_2$ of $S_v$ such that union of their edges is a rainbow. Therefore, concatenation of $b_1$ and $b_2$ creates a rainbow
path of length $4$, a contradiction. Hence, we need at least $d_v - 1$
\wasteds.
\end{proof}

\begin{lemma}\label[lemma]{lem:dcoloring4}
 In any maximum $P_4$-free coloring of
        $G'$, for any $v \in V(G)$ if $S_v$ has exactly $d_v - 1$ \wasteds,
        then its coloring has the following properties:
        $1)$ all incident edges of the center vertex of $S_v$ have the same
        color and $2)$ each remaining edge of $S_v$ has a distinct color.
\end{lemma}
\begin{proof}
We prove that the coloring of $S_v$ with $d_v - 1$ \wasteds should
have the two properties mentioned in the statement of the lemma.

Let $w$ be the center vertex of
$S_v$. For the sake of contradiction, suppose there are at least two
edges $e_1,e_2$, such that they have distinct colors and they are incident to $w$. 
Let suppose $e_1$ belongs to a branch $b_e$ of $S_v$. Let the other edge
of $b_e$ be $e'_1$. If $e'_1$ does not have the same color as either of $e_1$ or
$e_2$, then $e_2,e_1,e'_1$ and one of the $4$ rainbow edges in
$L_e$ from the \Cref{lem:fourcolor}, form a rainbow path of length $4$, a
contradiction to assumption that our coloring is a $P_4$-free coloring. Hence, $c(e'_1)$ is also used in the incident edges of $w$. With a similar approach, each edge in $S_v$ which is not incident to $w$, has a common color to one of incident edge of $w$. Therefore, the total number of distinct colors in $S_v$ is at most $d_v$. Hence, there are at least $d_v$ \wasteds in $S_v$, a contradiction.
\end{proof}

\begin{lemma}\label[lemma]{lem:adjacentcoloring1}
    Let $u,v \in V(G)$ and $e = \{u, v\} \in E(G)$. In any maximum
    $P_4$-free coloring of $G'$, $S_v$ has at least $d_v$ \wasteds or
    $S_u$ has at least $d_u$ \wasteds.
\end{lemma}
\begin{proof}
Suppose that $S_u$ has $d_u - 1$ \wasteds and $S_v$ has $d_v - 1$
\wasteds. By  \Cref{lem:dcoloring4}, we know that the coloring of $S_v$ and $S_u$ must have the mentioned properties in the \Cref{lem:dcoloring4} statement. Let $b_u, b_v$ be the branches of $u, v$ that corresponds to $e$. Note that $c(E(b_u)) \cap c(E(b_v)) = \emptyset$, otherwise $S_v$ has at least $d_v$ \wasteds or $S_u$ has at least $d_u$ \wasteds, a contradiction. Hence, by concatenation of $b_u$,
and $b_v$, we get a rainbow path of length $4$, a contradiction.
\end{proof}

\begin{lemma}\label[lemma]{lem:relationisar_four}
    Let $I$ be a maximum independent set of $G$ and $D$ the set of all \wasteds in a maximum $P_4$-free coloring of $G'$, then $|I| =
    2|E(G)| - |D|$.
\end{lemma}
\begin{proof}
    
For every $v \in I$, we color $S_v$ with $d_v - 1$ \wasteds as explained in the
\Cref{lem:dcoloring4}. For every $u \in V(G) \setminus I$, for each
branch $b$ of $S_u$, we color both of its edges with a new color,
$c_{v_b}$. For every $e \in E(G)$, we color $L_e$ as a rainbow with
new colors.

We claim that the above coloring is a maximum $P_4$-free coloring of $G'$
and show that  $|I|$ can be derived from $ar(G', P_4)$.

First, we prove it is a $P_4$-free coloring. 
There are two cases for any $P_4$ in $G'$:

\begin{enumerate}
\item A path between the center of $S_u$ to the center of $S_v$ for $\{u, v\} \in E(G)$.
\item A path that contains the center of $S_v$ as one of its non-leaf
  vertices.
\end{enumerate}

For the first case, at least one of $u$ and $v$ are not in $I$ for
$\{u, v\} \in E(G)$. Assume, w.l.o.g. $u \notin I$, hence $S_u$ has
been colored with $d_u$ \wasteds. Therefore, the two first edges of
this $P_k$ starting from the center of $S_u$ are a branch, $b$, in
$S_u$, then these two edges have same color $c_{u_b}$, so this $P_4$
is not a rainbow.

For the second case, the path has at least one branch $b$ of $S_v$
and at least one incident edge to the center of $S_v$ in another
branch of $S_v$, since $k = 4$. Hence, if we colored $S_v$ with $d_v -
1$ \wasteds, then two edges of the path that are incident to the center
of $S_v$ have the same color. Otherwise, we colored $S_v$ with $d_v$
\wasteds and both edges of $b$ have same color $c_{v_b}$, so the path
is not rainbow.

Now we prove that the mentioned coloring is a maximum $P_4$-free
coloring of $G'$. Note that by \Cref{lem:adjacentcoloring1}, the number of
$S_v$'s for $v \in V(G)$ with $d_v - 1$ \wasteds is at most $|I|$ and
in our coloring it is exactly $|I|$ and all others have at least $d_v$
\wasteds and in our coloring they have exactly $d_v$ \wasteds. Also,
we do not have any \wasted in the rest of $G'$. So our coloring has
the least number of \wasteds. Hence, our coloring has the most number
of distinct colors.

The total number of \wasteds in our coloring is:
\begin{align*}
    |D| = \sum_{v \in I} d_v - 1 + \sum_{v \notin I} d_v = \sum_{v \in
      V(G)} d_v - |I| = 2|E(G)| - |I| 
\end{align*} 
Hence, $|I| = 2|E(G)| - |D|$ as claimed.
\end{proof}

\begin{lemma}\label[lemma]{lem:hardness_four}
    For $k = 4$, \coloring is \nph.
\end{lemma}

\begin{proof}
We know that finding the size of the maximum independent set is \nph
and by \Cref{lem:relationisar_four} we know that solving the maximum
$P_4$-free coloring of $G'$ results in the size of the maximum
independent set of $G$.
\end{proof}

\begin{proof}[Proof of \Cref{thm:hardness}]
    By  \Cref{lem:hardness_even}, \Cref{lem:hardness_four}, and
    \Cref{lem:hardness_odd} we show that for every integer $k > 2$ the
    problem is hard.
\end{proof}

\section{Inapproximability of $P_3$ Anti-Ramsey Coloring}
\label{sec:inapproximability_p3}

In this section, we show that for every $\varepsilon > 0$ there is no
polynomial time $\frac{1}{\sqrt{|V(G)|}^{1-\varepsilon}}$-approximation for $P_3$-free
coloring unless $P{}={}NP$~\cite{clique}, 
or similarly there is no polynomial $\frac{1}{\sqrt[4]{|V(G)|}^{1-\epsilon}}$-approximation
to estimate $ar(G,P_3)$ unless $P{}={}NP$. 
We use basic building blocks from the previous sections and prove the hardness
via a gap preserving reduction from the maximum independent set
problem.

\begin{lemma}\label[lemma]{lem:fewdistinctcolors}
In any $P_3$-free coloring of $G$ there are at most $|V(G)|$ distinct colors.
\end{lemma}

\begin{proof}
    Let $c$ be a $P_3$-free coloring of $G$ with maximum number of distinct colors and let $G'\subseteq G$ be an edge minimal subgraph of $G$ which is colored by $ar(G,P_3)$ distinct colors w.r.t. $c$ and let $G_1, G_2 \dots, G_k\subseteq G'$ be the components of $G'$. Each $G_i$, $i\in [k]$, is rainbow colored otherwise it contradicts to the edge minimality condition of our choice of $G'$. 
    
    We prove that the number of edges in each $G_i$ is at most $|V(G_i)|$ and thus,  the lemma follows. In particular, we prove that for all $t \in [k]$, it holds that $G_t$ is either a star or a triangle.
    
    Fix $t$ and let $v$ be a vertex of maximum degree in $G_t$. Let $ v_1, \ldots,
    v_{\sizeof{N(v)}} $ be neighbors of $v$.
    If $\sizeof{N(v)}= 1$ then $G_t$ is a star. 
    
    If $\sizeof{N(v)}= 2$, then $G_t$ is a star, otherwise there exists
    an edge $e = \{v_1, u\}$ or  $e = \{v_2, u\}$. Assume w.l.o.g. that $e = \{v_1, u\}$.  If $u = v_2$, it's a triangle. Otherwise we have path of length $3$: $(v_2, v,v_1, u)$.
    
    If $\sizeof{N(v)} \geq 3$,  then $G_t$ is star. Otherwise, there are two possibilities: a)  there is an edge $e
    = \{ v_i, v_j \}$ ($i,j \in [\sizeof{N(v)}] , i \neq j$) and we have $P_3$ $(v_j, v_i, v, v_z) $ (for $z \in  [\sizeof{N(v)}] ,  z
    \neq j,i$);  or b)  there is an edge $e' = \{ v_i, u \}$ ($ i \in [\sizeof{N(v)}]$)
    and we have a $P_3$ $(u, v_i,  v, v_z) $ (for $z
    \in  [\sizeof{N(v)}] ,  z  \neq i$).
    
    If $G_i$ is a star then $|E(G_i)| + 1= |V(G_i)|$.
    If $G_i$ is a triangle then $|E(G_i)| = |V(G_i)| = 3$.
    
    Thus, the lemma follows.
    
\end{proof}

\textbf{Graph Construction}
Given an undirected graph $G$, we construct a $3$-partite graph $G'$ as follows:

\begin{enumerate}
    \item For each $ v \in V(G) $ we introduce two new
    vertices $s_v, t_v \in V(G')$ and $\myconsttt |V(G)|$ internally disjoint paths of length two, $\mathcal{P}^v = \{ P^v_1, \ldots,
    P^v_{\myconsttt |V(G)|} \}$, connecting $s_v$ to $t_v$.  
    
    \item For each edge $\{v, u\} \in E(G)$, add two new edges in $E(G')$:
    $\{s_v, t_u\}$, $\{t_v, s_u\}$. We call this set of edges $E^s_t$.
\end{enumerate}

Similar to the previous sections, we say that an edge coloring is
\emph{valid} if it is a $P_3$-free coloring. The following lemmas are
similar to the ones for $P_k$ when $k$ is odd, however, there are minor differences in
some cases, so we repeat some of them customized for $P_3$. 

\begin{lemma}\label[observation]{obs:invalidn3+coloring3}
    There is no valid coloring of $G'$ with more than $\myconsttt |V(G)|$ colors in $\mathcal{P}^v$ for any $v\in V(G)$.
\end{lemma}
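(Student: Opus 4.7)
The plan is to exploit the fact that any two paths in $\mathcal{P}^v$ share exactly their endpoints $s_v$ and $t_v$ and each has length two, so together they form a $C_4$; Observation~\ref{obs:validC4coloring} can then be applied repeatedly. I would argue by contradiction and assume a valid coloring of $G'$ puts strictly more than $\myconsttt |V(G)| = 4|V(G)|$ distinct colors on the edges of $\mathcal{P}^v$.

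The first step is a pigeonhole argument. Since $\mathcal{P}^v$ consists of exactly $4|V(G)|$ paths of length two, if every single path were monochromatic then $\mathcal{P}^v$ could contribute at most $4|V(G)|$ distinct colors. So under our assumption some $P^v_i \in \mathcal{P}^v$ must carry two distinct colors $c_1 \neq c_2$ on its two edges.

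The second step is the $C_4$ propagation. For any other path $P^v_j \in \mathcal{P}^v \setminus \{P^v_i\}$, the union $P^v_i \cup P^v_j$ is a cycle of length four, because the paths are internally disjoint and share only $s_v$ and $t_v$. The restriction of a $P_3$-free coloring to a subgraph is still $P_3$-free, so Observation~\ref{obs:validC4coloring} forces this $C_4$ to use at most two distinct colors in total. As $c_1, c_2$ already appear on $P^v_i$, the two edges of $P^v_j$ must both be colored from $\{c_1, c_2\}$.

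Applying this to every $P^v_j$ shows that the entire $\mathcal{P}^v$ is colored using only $c_1$ and $c_2$, contradicting the assumption of more than $4|V(G)| \geq 4$ colors (assuming $|V(G)| \geq 1$, which is the non-trivial case). The only point that requires care is verifying that $P^v_i \cup P^v_j$ is a genuine $C_4$ and not a multigraph or a shorter cycle, but this is immediate from the internal disjointness of the paths in the construction; everything else is a direct invocation of Observation~\ref{obs:validC4coloring}.
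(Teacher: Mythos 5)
Your proof is correct and follows essentially the same route as the paper: the paper's proof of this lemma simply defers to the proof of Lemma~\ref{lem:invalid n3+ coloring}, which for $k=3$ is exactly your argument --- pigeonhole to find one non-monochromatic (hence rainbow) path $P^v_i$, then use the $C_4$ bound of Observation~\ref{obs:validC4coloring} to force every other path in $\mathcal{P}^v$ to reuse the colors $c_1,c_2$. Your write-up is in fact more explicit than the paper's, but there is no substantive difference in approach.
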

\begin{proof}
The proof of this lemma is similar to the proof of~\Cref{lem:invalid n3+ coloring}.
\end{proof}

\begin{lemma}\label[lemma]{lem:3color}
    Let  $\{v, u\} \in E(G)$. In any $P_3$-free coloring of $G'$, if
    there are at least three distinct colors
    in $\mathcal{P}^v$ then $\mathcal{P}^u$ is colored with at most two colors.
\end{lemma}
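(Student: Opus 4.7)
The plan is to argue by contradiction: assume that both $\mathcal{P}^v$ and $\mathcal{P}^u$ use at least three distinct colors, and construct a rainbow $P_3$. Denote the two edges of $P^v_i$ by $a^v_i=\{s_v,m^v_i\}$ and $b^v_i=\{m^v_i,t_v\}$ with colors $\alpha^v_i$ and $\beta^v_i$; use analogous notation on the $u$-side. Let $x=c(\{s_v,t_u\})$ and $y=c(\{t_v,s_u\})$.

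First I would prove the structural claim that whenever $\mathcal{P}^v$ uses at least three colors, either $|\{\alpha^v_i\}_i|\ge 3$ or $|\{\beta^v_i\}_i|\ge 3$. The template internal $P_3$ to exploit is $m^v_i - s_v - m^v_{i'} - t_v$, whose color triple is $(\alpha^v_i,\alpha^v_{i'},\beta^v_{i'})$, together with its mirror through $t_v$. A short case analysis on the sizes of $\{\alpha^v_i\}_i$ and $\{\beta^v_i\}_i$ under the assumption that both are bounded by $2$ while their union has size at least $3$ produces three pairwise distinct colors in such a triple, yielding a rainbow $P_3$ already inside $\mathcal{P}^v$. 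Hence, without loss of generality, there are indices $i_1,i_2,i_3$ with pairwise distinct $\alpha^v_{i_k}$.

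Next, I would use the crossing edge $\{s_v,t_u\}$: the path $m^v_{i_k} - s_v - t_u - m^u_j$ has color triple $(\alpha^v_{i_k},x,\beta^u_j)$, which must contain a repetition for every $k$ and every $j$. At most one of the three values $\alpha^v_{i_k}$ can equal $x$, so at least two of them, say for $k=1,2$, differ from $x$. For such $k$ the repetition forces $\beta^u_j\in\{x,\alpha^v_{i_k}\}$, and intersecting the two constraints (coming from $\alpha^v_{i_1}\ne\alpha^v_{i_2}$) pins down $\beta^u_j=x$ for every $j$.

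Finally, I would exploit the identity $\beta^u_j\equiv x$: the internal $P_3$ $m^u_j - s_u - m^u_{j'} - t_u$ has colors $(\alpha^u_j,\alpha^u_{j'},x)$, whose required repetition forces any two $\alpha^u_j$-values that both differ from $x$ to be equal. Thus the $\alpha^u_j$'s lie in a set $\{x,a\}$ of size at most two, so the whole palette of $\mathcal{P}^u$ has at most two colors, contradicting the assumption. The only delicate step is the first one---ruling out the ``mixed'' configurations with $|\{\alpha^v_i\}_i|=|\{\beta^v_i\}_i|=2$ but three colors in total---yet in each subcase the same template $P_3$ immediately provides a rainbow violation, so this case analysis is short.
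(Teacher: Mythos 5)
Your proof is correct, and it takes a genuinely different route from the paper's. The paper proves a stronger structural claim: if $\mathcal{P}^v$ carries at least three colors, then \emph{both} $s_v$ and $t_v$ are incident (inside $\mathcal{P}^v$) to three differently colored edges. Applying the same claim to $u$ and combining with a single crossing edge $\{s_u,t_v\}$ or $\{s_v,t_u\}$ immediately produces a rainbow $P_3$, since any three pairwise distinct colors at one end of the crossing edge together with any three at the other end always leave room to pick a triple avoiding the crossing edge's color. Your argument instead establishes only the weaker disjunction (at least one of the two endpoints of $v$ sees three distinct colors), then substitutes a more hands-on propagation step: you use one crossing edge and two of the distinct colors at $s_v$ to pin $\beta^u_j = x$ for every $j$, and then re-run the internal-$P_3$ constraint on $\mathcal{P}^u$ to bound its palette by $\{x,a\}$ directly. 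The paper's version is slicker because the symmetric ``both endpoints'' claim makes the final step a one-liner; yours is slightly heavier in bookkeeping but avoids needing the stronger structural claim (and in particular avoids re-deriving it on the $u$-side), which some may find more mechanical. Both are valid proofs, and your case analysis in the first step (including the mirror template through $t_v$ to handle $|\{\alpha^v_i\}|=1$) does close the only nontrivial gap.
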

\begin{proof}

First, we claim that if $\mathcal{P}^v$ is colored with at least three distinct colors then
$s_v$ and $t_v$ are incidents to three edges with distinct colors.
Assume the contrary,
then w.l.o.g. $s_v$ is incident to two edges $e_1 = \{s_v, w_1 \} , e_2 = \{ s_v , w_2 \}$ of
distinct colors and there is an edge $e_3=\{t_v,w_3\}$ incident to $t_v$ such that
$c(e_1)\neq c(e_2)\neq c(e_3)$. Then, it holds that $w_3 \neq w_2$ as otherwise we get a
rainbow colored path $(t_v , w_2, s_v, w_1)$, similarly $w_3 \neq w_1$. 
Consider an edge $e'=\{w_3,s_v\}$. We show that $c(e') = c(e_3)$ and thus we obtain a contradiction.

Assume that $c(e') \in \{ c(e_1), c(e_2) \}$. If $c(e')=c(e_1)$ (or $c(e')=c(e_2)$) the path
$(t_v,w_3,s_v,w_2)$ (or $(t_v,w_3,s_v,w_1)$) is a rainbow colored path,
hence $c(e') = c(e_3)$. Thus there are three edges of distinct colors
incident to $s_v$ and it follows there are at least three edges of distinct colors incident
to $t_v$. 

Now suppose $\mathcal{P}^u$ has at least three distinct colors then both of its endpoints ($s_u$ and $t_u$) are incident to three edges of distinct
colors but those edges with edge $e=\{s_u,t_v\}$ (or
$\{s_v,t_u\}$) and three edges of distinct colors incident to $s_v$ and $t_v$ will result in a
rainbow path of length three.
\end{proof}

\begin{lemma}\label[lemma]{lem:lowerbound3}
    $ar(G',P_3) > \myconsttt |V(G)|\cdot |I|$ 
\end{lemma}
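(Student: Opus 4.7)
The plan is to exhibit an explicit $P_3$-free coloring of $G'$ that uses at least $\myconsttt |V(G)| \cdot |I| + 1$ distinct colors; strict inequality then follows. The construction parallels the one in Lemma~\ref{lem:lower bound}, but it is much simpler because every path in $\mathcal{P}^v$ has only two edges. Concretely, for each $v \in I$ I would assign to each length-$2$ path $P^v_i \in \mathcal{P}^v$ a private color $c^v_i$ and paint both of its edges with that color, using pairwise disjoint color pools across distinct $v \in I$ (contributing $\myconsttt|V(G)|\cdot |I|$ colors in total). For each $v \notin I$, all edges of $\mathcal{P}^v$ are painted with a single new color $c$, and every edge of $E^s_t$ also receives $c$. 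The total is $\myconsttt |V(G)| \cdot |I| + 1$ colors.

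Next I would verify $P_3$-freeness by a case analysis on the number of edges of a given $P_3$ that lie in $E^s_t$. If at least two of them lie in $E^s_t$, they all share $c$. If none do, then since the vertex sets of distinct $\mathcal{P}^u$ are pairwise disjoint, the $P_3$ must sit inside a single $\mathcal{P}^v$; every $P_3$ inside such a complete bipartite gadget has the shape $s_v - m - t_v - m'$, whose first two edges share $c^v_m$ if $v \in I$ and whose three edges all equal $c$ if $v \notin I$.

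The main obstacle is the case of exactly one $E^s_t$ edge, and this is where the independence of $I$ is crucially used. Suppose this edge is $\{s_a, t_b\}$ with $\{a,b\}\in E(G)$ sitting in the middle of the $P_3$. The only non-$E^s_t$ edges at $s_a$ (resp.\ $t_b$) belong to $\mathcal{P}^a$ (resp.\ $\mathcal{P}^b$), so the other two edges lie in $\mathcal{P}^a$ and $\mathcal{P}^b$. Independence of $I$ forbids $\{a,b\}\subseteq I$, so at least one of those two edges is colored $c$, giving two $c$-colored edges in the $P_3$. If instead the unique $E^s_t$ edge lies at an endpoint of the $P_3$, the degree-$2$ structure at middle vertices forces the remaining two edges to be precisely the two edges of a single $P^v_i$, which share a color by construction. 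In every case the $P_3$ is not rainbow, so the coloring is valid and the lemma follows.
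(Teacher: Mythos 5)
Your proposal is correct and uses the same coloring as the paper's proof (which itself just invokes the coloring from Lemma~\ref{lem:lower bound} and Observation~\ref{obs:validn3coloring}): paths in $\mathcal{P}^v$ for $v\in I$ each get a private color on both edges, and everything else is painted a single color $c_0$. Your case analysis on how many $E^s_t$ edges a $P_3$ contains is a more detailed write-up of the verification the paper states tersely, with the independence of $I$ entering in exactly the same place.
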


\begin{proof}

    For $v \in
        I$ color $\mathcal{P}^v$ with $\myconsttt |V(G)|$ different
        colors such that the two edges of the path $P_i$ get the same color and all other edges of $G'$ with the same color
        $c_0$. 
        There is no rainbow colored $P_3$ in $\mathcal{P}^v$ for all $v\in I$ 
        and all other $P_3$'s have at least two edges with color $c_0$ 
        or there is a $P^v_i$ which they contain it.
\end{proof}

\begin{theorem}\label{thm:Inapproximation}
	Unless $P{}={}NP$, for any fixed $\delta > 0$, there is no
	polynomial time $\frac{1}{\sqrt{|V(G)|}^{1-
			\delta}}$-approximation for $P_3$-free coloring even in
	$3$-partite graphs.
\end{theorem}
\begin{proof}
First of all, note that the graph $G'$ constructed above is a
$3$-partite graph: put every $s_v$ for $v\in V(G)$ in part $1$, every
$t_v$ in part $2$ and every other vertex in part $3$. 

We provide a reduction from the independent set problem. 
More precisely we know there is no polynomial time $\frac{1}{|V(G)|^{1- \varepsilon}}$-approximation for MIS for any fixed $\varepsilon >
0$~\cite{clique} unless $P{}={}NP$. 
We show that
if there is a $\frac{3}{\sqrt{|V(G)|}^{1-\varepsilon'}}$-approximation
for $P_3$-free coloring (for any constant $\varepsilon'$) then there
is a $\frac{1}{|V(G)|^{1- \varepsilon}}$-approximation for MIS in
polynomial time. 

Assume that there is a $\frac{3}{\sqrt{|V(G')|}^{1-\varepsilon'}}$-approximation
for $P_3$-free coloring. The graph $G'$ has $\myconsttt |V(G)|^2 + 2  |V(G)|$ vertices. 
By \Cref{lem:lowerbound3}, we conclude that we have at least 
\begin{align*}
\biggl \lceil \frac{3}{\sqrt{\myconsttt |V(G)|^2 + 2  |V(G)|}^{1-\varepsilon'}} \cdot |I| \cdot \myconsttt |V(G)| \biggr \rceil \text{ colors.}
\end{align*}

Let $X= \{ v \in G \mid \mathcal{P}^v \text{ has more than 2
  colors}\}$. By \Cref{lem:3color}, we know that $X$ is an independent
set. 

Now we just need to show that $|X| \geq \bigl \lceil \frac{1}{|V(G)|^{1-
    \varepsilon}} \cdot |I| \bigr \rceil$. 

To aim contradiction assume $|X| < \bigl\lceil
\frac{1}{|V(G)|^{1-\varepsilon}} \cdot |I| \bigr\rceil$. 
We calculate the maximum number of colors and prove that it is less than a $\bigl \lceil
\frac{3}{\sqrt{\myconsttt |V(G)|^2 + 2  |V(G)|}^{1-\varepsilon'}}  \cdot |I| \cdot \myconsttt  |V(G)|
 \bigr \rceil$. 
%By \Cref{obs:invalidn3+coloring3} 
We have at
most $\myconsttt |V(G)| \cdot |X|$ colors for $\mathcal{P}^v$s in $X$,
$2  (|V(G)| - |X|)$ colors for other $\mathcal{P}^v$s and $2
|V(G)|$ for the remaining edges by~\Cref{lem:fewdistinctcolors}.
So we have:

\begin{align*}
&\myconsttt |V(G)| \cdot |X| + 2 (|V(G)| - |X|) + 2 |V(G)| \leq  \\ 
& \myconsttt  |V(G)| \cdot (|X|+1) \leq \\
&\biggl \lceil \frac{1}{|V(G)|^{1- \varepsilon}} \cdot |I|  \cdot \myconsttt  |V(G)| \biggr \rceil \leq \\
&\biggl \lceil \frac{3}{{(2|V(G)| + 1)} ^{1- \varepsilon}} \cdot |I|  \cdot \myconsttt  |V(G)| \biggr \rceil < \\
&\biggl \lceil \frac{3}{\sqrt{\myconsttt |V(G)|^2 + 2  |V(G)|} ^{1- \varepsilon}} \cdot |I|  \cdot \myconsttt  |V(G)| \biggr \rceil.
\end{align*}

This completes the proof that there is no $\frac{3}{\sqrt{|V(G)|}^{1- \varepsilon'}}$-approximation for the problem, now if we let the $n$ be big enough we can conclude that there is no $\frac{1}{\sqrt{|V(G)|}^{1- \delta}}$-approximation as well by replacing appropriate $\varepsilon'$ with $\delta$.
\end{proof}

\section{Precoloring $ar(G,P_k)$ Has No Subexponential Algorithm for all $k > 2$}
\label{apx:finegrain}

In this section, we study the complexity of exact algorithms computing
the anti-Ramsey number $ar(G,P_k)$ where $P_k$ is a path with $k$ edges. 
We now consider a variant of the problem for the exact time
complexity of the problem.

\begin{problem}[Precolored $ar(G,H)$]
The input consists of a graph $G = (V,E)$ where $E = E_1 \cup
E_2$. The edges in $E_1$ have assigned a color while the edges in
$E_2$ are uncolored.  Color the edges in $E_2$ with as
many new colors as possible such that there is no rainbow copy of $H$ in
$G$.  
\end{problem}

For this problem, we provide a \emph{fine grained reduction}
from 3SAT to show the hardness of the problem. That is, we provide an
instance of Precolored $ar(G,P_k)$ problem (for a constant $k>2$), i.e., a graph $G$ where some of the edges are precolored, that
asymptotically has the same size as the instance of the $3SAT$ problem,
and if there is a $2^{o(|E|)}$ algorithm to compute precolored
$ar(G,P_k)$ then there is a subexponential algorithm to solve the 3SAT
problem and this is impossible unless ETH fails. The main technical contribution
of this section is the following lemma, in which we construct the
aforementioned sparse graph.

\begin{figure}
	\begin{center}
		\includegraphics[width=\textwidth]{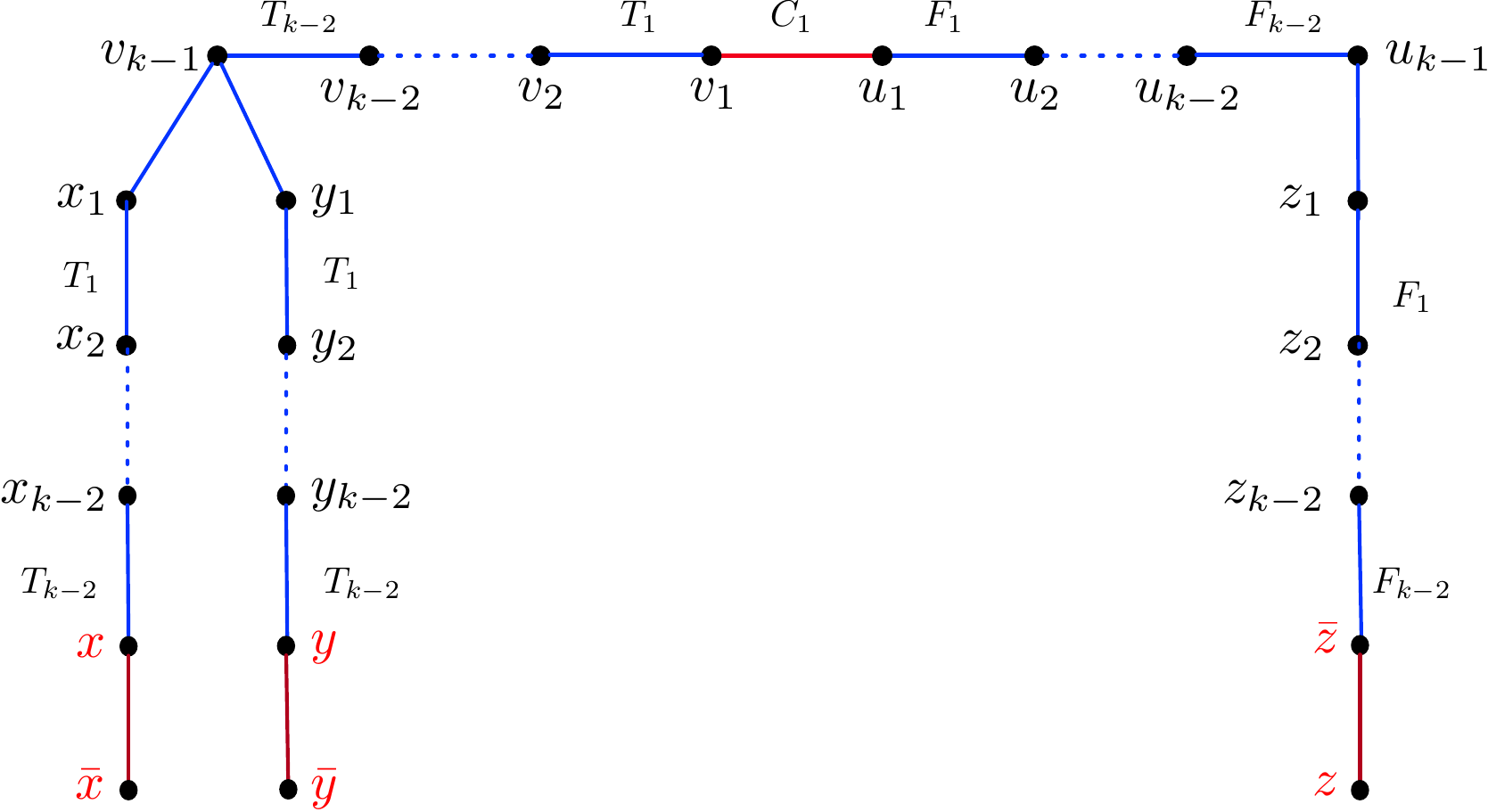}
	\end{center}
	\caption{Sketch of a gadget where the clause has exactly one negative literal. Edge $\{u_1, v_1\}$ is forced to get the color of one of its neighbors. The gadget is sparse and has $5k-7$ edges.} 
	\label{fig:new-gadget1}
\end{figure}

\begin{comment}
The example clause is actually $(x \vee y \vee \bar{z})$. The bottom edges in the~\Cref{fig:new-gadget1} are actually literal gadgets (a single edge), so each variable gets exactly one color (all other clauses are connected to it, we did not draw all of them). 
Later we will see a color of a literal gadget is either $T_{k-2}$ or $F_{k-2}$ which later determines value of the variable in the SAT formula. 
The construction of the clause gadget is such that among all uncolored edges, only the edges $\{v_{k-1},x_1\}, \{v_{k-1},y_1\},\{u_{k-1}, z_1\}$ are able to get a new color (a color that is not in the set of predefined colors). 
It is possible to show that these 3 edges together can afford only one new color for the corresponding clause gadget. This new color enforces the coloring of other uncolored edges, and in particular determines whether the corresponding literal gadget will get the color $T_{k-2}$ or $F_{k-2}$.
Hence, from a coloring that assigns one new color per clause, we can find the satisfying assignment and vice versa. 

The above line of analysis shows the problem is NP-hard in this graph. However our gadgets are light weight: each of them has $O(k)$ edges, hence the constructed graph is sparse. 
Given the sparsification lemma~\cite{IMPAGLIAZZO2001512} and the fact that our constructed graph has
linear size w.r.t.\ the size of the 3-SAT instance, we conclude that there is no $2^{o(|E(G)|)}$ time algorithm for Precolored $ar(G, P_k)$ assuming ETH. 

\end{comment}

\begin{lemma}\label{lem:precolored_k}
The Precolored $ar(G,P_k)$ is NP-hard for fixed $k>2$.
\end{lemma}

    We show the hardness using a reduction from the 3-SAT problem. Before going to the actual proof let us go through our construction and some useful observations.
    
    \subsection*{Construction of the Graph}
     Given a Boolean formula
    $\phi$ with $n$ variables and $m$ clauses, we create a
    graph $G=(V,E)$ as follows. To simplify the understanding, with abuse of
    notation, we color some edges with colors $\{T_{1}, \ldots, T_{k-2}\}$ or $\{F_{1}, \ldots, F_{k-2}\}$---one may assume
   them to be distinct integers. 
    
        \textbf{Literal Gadgets}. For each variable $X_i \in \phi$ we create an edge $\{x_i,\bar{x}_i \}$ and we will not precolor it. We will see that its color would determine the boolean value of variable $X_i \in \phi$. In the next step, we construct the clause gadgets and connect them to the literal gadgets. 
        
      \textbf{Clause Gadgets}. For a clause $C_i = (x \vee y \vee z)$ we distinguish two cases:
        
        \begin{itemize}
            \item Either all literals are negative variables, or all are
            positive. In this case, for each literal $L \in C_i$ we create a path $(L_1, \ldots, L_{k-2}, L)$, where the last edge, i.e. $\{L_{k-2}, L\}$, is connecting our clause gadget to vertex $L$ of the literal gadget. Moreover, if $L$ is a positive literal we precolor this path by assigning color $T_j$ (or $F_j$ if $L$ is a negative literal)to the edge $\{L_j, L_{j+1} \}$ for $j<k-2$ and, precolor $\{L_{k-2}, L\}$ with $T_{k-2}$ (or $F_{k-2}$ if negative). Finally we add vertex $v$ and connect it to all 3 paths via $\{v,x_1\},\{v,y_1\},\{v,z_1\} \in E$, which are not precolored.
            
            \item Two of the literals are either positive or negative. Assume
            without loss of generality that $x$ and $y$ are both positive literals
            or both negations of variables.  Similar to the previous case, for each literal $L\in \{x,y,z\}$, we create path $(L_1, \ldots, L_{k-2}, L)$ and precolor it in the same manner. Finally instead of a vertex $v$ we create a path $(v_{k-1}, \ldots, v_1, u_1, \ldots, u_{k-1})$ in order to connect the former paths together. To do that we add edges $\{v_{k-1}, x_1\}, \{v_{k-1}, y_1\}$ and $\{u_{k-1}, z_1\}$ to $E$. Then, if $x$ and $y$ are positive literal we precolor each $\{v_j, v_{j+1} \}$ with $T_j$ ($F_j$ if $x,y$ are negative) and $\{u_j, u_{j+1} \}$ with $F_j$ ($T_j$).
            Note that we leave out $4$ edges uncolored, for now, namely $\{v_{k-1}, x_1\}, \{v_{k-1}, y_1\},\{u_{k-1}, z_1\}$ and $\{v_1, u_1\}$.
            
        \end{itemize}
    
    See \Cref{fig:new-gadget1} for a sketch of the construction of the gadgets. W.l.o.g., we assume that for every $i\in [n]$ both variable $x_i$ and its negation appear in some clauses as literals. Otherwise, if a variable appears only negative or positive, we can simply satisfy all the clauses that contain that variable. The above assumption enforces the following observation.
    
    \begin{observation}\label{obs:TF_Middle_Man}
        In a $P_k$-free coloring, for each variable $x$, $c(\{x, \bar{x}\}) \in \{T_{k-2},F_{k-2} \}$, and for each clause $c(\{v_1, u_1\}) \in \{T_{1},F_{1}\}$. 
    \end{observation}
    
    This follows from the construction, i.e., consider paths $P=(x_1, x_2, \ldots, x_{k-2},x, \bar{x}, \bar{x}_{k-2})$ and $Q=(\bar{x}_1, \ldots, \bar{x}_{k-2},\bar{x}, x, x_{k-2})$.
        $P$ contains colors $\{T_1, \ldots, T_{k-2}\} \cup \{F_{k-2} \}$ and $Q$ contains $\{F_1, \ldots, F_{k-2}\} \cup \{T_{k-2} \}$. Hence $c(\{x, \bar{x} \})$ must be either $T_{k-2}$ or $F_{k-2}$. The same argument applies to $\{v_1, u_1\}$.
        
     In the following by new color we mean a color other than $F_i$'s and $T_i$'s, i.e. preassigned colors.
    \begin{lemma}\label{lem:OneColorEachClause}
    	In a $P_k$-free coloring, there can be at most 1 new color in each clause gadget.
    \end{lemma}
    \begin{proof}
    	Towards a contradiction assume there are two new colors in clause gadget $C=z \vee y \vee z$, namely $c_1$ and $c_2$.
    	\begin{itemize}
    		\item $x, y, z$ are all positive literals or all are negative literals. There are three unassigned edges, namely $\{v, x_1\}$, $\{v, y_1\}$ and $\{v, z_1\}$. 
    		W.l.o.g. assume that color $c_1$ is assigned to $\{v, x_1\}$ and $c_2$ is assigned to $\{v, y_1\}$. Then $(y_1, v, x_1, x_2, \ldots, x_{k-2}, x)$ is a rainbow $P_k$, contradiction.
    		
    		\item $x, y$ are positive but $z$ is a negative literal (or vice-versa). If $c_1$ and $c_2$ are assigned to $\{v_{k-1},x_1\}$ and $\{v_{k-1},y_1\}$ we immediately have a rainbow $P_k$ similar to the previous case. Also remember that by~\Cref{obs:TF_Middle_Man} edge $\{u_1, v_1\}$ cannot get a new color. Thus there remain two symmetric cases: $c_1$ is assigned to $\{u_{k-1}, z_1 \}$ and $c_2$ is assigned to $\{v_{k-1}, x_1\}$ or $\{v_{k-1}, y_1 \}$. Consider the following two paths, we claim at least one of them is rainbow:
    		\begin{enumerate}
    			\item $Q_z=(z_1, u_{k-1}, u_{k-2},\ldots, u_1, v_1)$
    			\item $Q_x=(x_1, v_{k-1}, v_{k-2},\ldots, v_1, u_1)$
    		\end{enumerate}
    		Observe that $c(Q_z)= \{c_1\} \cup \{F_1, \ldots, F_{k-2} \} \cup \{c(\{v_1, u_1\}) \}$ and $c(Q_x)= \{c_2\} \cup \{T_1, \ldots, T_{k-2} \} \cup \{c(\{v_1,u_1 \}) \}$. Remember that $c(\{v_1,u_1 \}) \in \{T_1, F_1\}$. But, either of them makes at least one of $Q_x$ and $Q_z$ become rainbow, a contradiction.
    		
    	\end{itemize}
    \end{proof}

	We are ready to prove Lemma~\ref{lem:precolored_k}.
    \begin{proof}[Proof of Lemma~\ref{lem:precolored_k}]
    We claim that the formula $\phi$ is satisfiable if and only if
    $ar(G,P_k) = m + 2(k-2)$, that is there is a coloring of the edges of $G$
    with $m+2(k-2)$ colors ($\{T_1, \ldots, T_{k-2}\}$,$\{F_1, \ldots, F_{k-2}\}$ and another new $m$ colors, one for each clause).
    
    For the direct implication, if the formula $\phi$ is satisfiable, we color the edges of $G$ as follows.  For each variable $X_i$, if $X_i$
    is assigned to True, then we color the edge $\{x_i,\bar{x_i}\}$ with
    $T_{k-2}$, otherwise we color this edge with $F_{k-2}$. 
    
    Let $C_i = (x \vee y \vee z)$. Assume without loss of
    generality that $C_i$ is satisfied by the literal $z$. Then if all literals are negative or all are positive we
    color the edge $\{v,z_1\}$ with a new color. Then, if $z$ corresponds to negation of a
    variable, we color $\{v, x_1\}$ and $\{v, y_1\}$ with $F_1$ (otherwise $T_1$).
    
    Now assume that $x,y$ are positive literals but $z$ is a negative literal. Again, W.l.o.g., assume that $C_i$ is satisfied by $z$. We color $\{u_{k-1}, z_1\}$ by a new color and then if $z$ is negative, we color $\{v_1, u_1\}$ with $F_1$ ($T_1$) and $\{v_{k-1}, x_1\},\{v_{k-1}, x_1\}$ with $T_1$ ($F_1$).
	
	We now show that the coloring is valid. Suppose that there exists a rainbow $P_k$ path $Q$ that goes through clause gadget of $C_i=x \vee y \vee z$. Observe that precolored edges do not yield a rainbow path of length $k$. Thus $Q$ must contain a newly colored edge. By a case distinction on value assignment to literals of $C_i$ we have:
	\begin{enumerate}
		\item $x,y,z$ are all positive literal (or vice-versa). Assume w.l.o.g., that $C_i$ is satisfied by $z$.
		Then all of the edges of the clause gadget get the same color $T_1$ except for $\{v,z_1\}$ which has a new color.
		Note that both of the $x$-branch (the path $(v,x_1, \ldots, x_{k-2})$) and $y$-branch of the clause gadget starts with consecutive $T_1$'s, that is $\{v,x_1\}$ and $\{x_1, x_2\}$ (respectively $\{v,y_1\}$ and $\{y_1, y_2\}$) are colored with $T_1$. Hence a rainbow $P_k$ path that contains $\{v,x_1\}$ or $\{v,y_1\}$ cannot go inside $x,y$ or $z$-branch since it inevitably gets two $T_1$'s. Thus it can only contains $\{v,z_1\}$ and goes through $z$-branch, i.e. $(v, z_1, \ldots, z_{k-2},z,\bar{z})$. But the latter contains two $T_{k-2}$'s.
		
		\item  $x,y$ are positive and $z$ is a negative literal (or vice-versa). We assume that $C_i$ is satisfied by $z$, the other two cases are similar. In this case, $Q$ should contains at least one of the edges in the sets; $\{ v_1, u_1 \}$, $ \{ v_{k-1}, x_1 \} $, $\{v_{k-1}, y_1\}$, and $ \{ u_{k-1}, z_1 \}$. Otherwise due to the construction and Observation ~\ref{obs:TF_Middle_Man}, it cannot be a rainbow path of length k.
		\begin{itemize}
			\item $Q$ contains edge $\{v_1,u_1\}$. Note that $c(\{v_1,u_1\})=F_1$ therefore $Q$ cannot contain $\{u_1, u_2\}$. Thus $Q$ must be either $(u_1, v_1, \ldots, v_{k-1}, x_1)$ or $(u_1, v_1, \ldots, v_{k-1}, y_1)$, but neither of them is a rainbow path.
			\item $Q$ contains edge $\{v_{k-1}, x_1\}$ (or $\{v_{k-1}, y_1\}$). Therefore $Q$ cannot contain $\{v_{k-1}, y_1\}$, $\{x_1, x_2\}$ or $\{v_2, v_1\}$, since all of them have color $T_1$ and by Observation ~\ref{obs:TF_Middle_Man}, we know that the color of $\{v_{k-1}, x_1\}$ is also $T_1$. But if we remove these edges, the connected component which contains $\{v_{k-1}, x_1,\}$ has longest path of length $k-1$. Hence a rainbow $P_k$ is not possible.
			\item $Q$ contains edge $\{u_{k-1},z_1\}$. Then any rainbow path of length $k$ must start from some $u_s$ and end in $z_k$, i.e. $Q=(u_s, \ldots, u_{k-1}, z_1, \ldots, z_{t})$. Since $c(\{u_s, u_{s+1}\})=c(\{z_s, z_{s+1}\})$, we have that $t=s$. Thus $|Q| < k$. For the corner case of $s=k-1$ which a $\{u_{k-1}, u_k\}$ does not exist we have that $Q=(u_{k-1}, z_1, \ldots, z_{k-2},z, \bar{z})$ which is not a rainbow since $\{z, \bar{z}\}$ and $\{z_{k-1}, z\}$ are colored by $F_{k-2}$.
		\end{itemize}
		
	\end{enumerate}
	Hence the coloring is valid.

    For the reverse implication, assume that we are given a coloring of $G$ with $m+2(k-2)$ colors. We show how to recover a satisfying assignment for $\phi$. First of all, notice that in a clause gadget we can add at most one new color by Lemma~\ref{lem:OneColorEachClause}. Thus, we construct a satisfying assignment as follows. For each variable $X_i$ if the edge $\{x_i, \bar{x}_i \}$ is set to $T_{k-2}$ then we set $X_i$ to True, otherwise we set $X_i$ to False. Note that Observation~\ref{obs:TF_Middle_Man} ensures that the latter is possible.
    
    Finally, we show that this is a satisfying assignment for $\phi$. Towards a contradiction suppose that there is a clause $C_i= x\vee y\vee z$ that is not satisfied. By Lemma~\ref{lem:OneColorEachClause} each clause has exactly one new color. Thus, we have two cases:
    
    \begin{enumerate}
    	\item $x,y,z$ are all positive (or all negative). Hence $\{x,\bar{x}\},\{y,\bar{y}\}$ and $\{z,\bar{z}\}$ are all colored by $F_{k-2}$. W.l.o.g. let $\{v, x_1\}$ be the edge that gets the new color $c_1$ of clause $C_i$. Then $(v, x_1, \ldots, x_{k-1}, x , \bar{x})$ has colors $\{T_1, \ldots, T_{k-2}\} \cup \{c_1, F_{k-2}\}$. Thus the coloring is not $P_k$-free.
    	
    	\item $x,y$ are positive and $z$ is a negative literal (or vice-versa). Note that $\{v_{k-1}, x_1\}, \{v_{k-1}, y_1\}$ or $\{u_{k-1}, z_1\}$ cannot get the new color since the same argument of the above case applies. Thus it must be $\{v_1, u_1\}$ that gets the new color. But, by~\Cref{obs:TF_Middle_Man} it can only get $T_1$ or $F_1$. That contradicts with each clause getting exactly one new color.
    \end{enumerate}
Hence the assignment must satisfy $\phi$.
\end{proof}

Given the above lemma and the sparsification lemma~\cite{IMPAGLIAZZO2001512} we conclude the
following theorem.

\begin{repeathm}{thm:ethprecolored}
There is no $2^{o(|E(G)|)}$ algorithm for Precolored $ar(G,P_k)$, for any fixed $k$, unless ETH fails.
\end{repeathm}
\begin{proof}
    We may assume the 3SAT instance used in the construction of
        the graph $G$ in the proof of Lemma~\ref{lem:precolored_k} is sparse,
        that is the number of clauses $m$ is in the order of number of
        variables $n$, i.e. $m\in O(n)$. Thus by the sparsification
        lemma~\cite{IMPAGLIAZZO2001512} there is no $2^{o(n)}$ algorithm to solve 3SAT (unless ETH fails).
    
    On the other hand in the construction of the graph $G$ for each variable we have $1$ edge and for each clause, we have at most $5k-7$ edges so in total the number of edges in the graph is bounded above by $5km+n$ hence $|E(G)|\in O(n)$ for a constant $k$. Hence there is no $2^{o(|E(G)|)}$ algorithm for Precolored $ar(G,P_k)$ unless ETH fails.
\end{proof}

\section{Color Connected Coloring and its Applications}
\label{apx:approximation_trees}

In this section, we introduce the notion of color connected coloring and
using that we provide a polynomial time algorithm to compute
$ar(T,P_k)$, where $T$ is a tree. Roughly speaking, in a color connected
coloring we try to color the graph with the maximum number of colors so
that the set of edges of every color class induces a connected subgraph. The main result of this section is the following theorem.

\begin{repeathm}{thm:TreePoly}
	There is an exact algorithm that computes $ar(T,P_k)$ in linear time w.r.t. $|V(T)|$, where $T$ is a tree.
\end{repeathm}

Let $c$ be a $P_k$-free coloring of a graph $G$ and let $c_1$ be one of such
colors used in $c$. Then, we call the induced graph $G[\{v\mid \exists u\in V(G),
e=\{u,v\}\in E(G), c(e)=c_1\}]$ as an induced $c_1$-graph and we write it
$G[c_1]$. If $G[c_1]$ is connected then we say $c_1$ is a
\emph{connected color}; otherwise, it is a \emph{disjoint color}. 

\begin{definition}[Color Connected Coloring]
	Given a graph $G$, a $P_k$-free coloring $c$ of $G$ is a color connected coloring if for every color $c_i$ used in $c$, $G[c_i]$ is a connected component.
\end{definition}

In the rest of this section, we assume that $T$ is a rooted tree with $r_T$ as its root. We define $T_v$ as the largest subtree with $v \in V(T)$ as its root. Depth of a vertex $v \in V(T)$, $H_v$, is the number of edges between $v$ and the root. Furthermore, we define $C(v)$ as the set of children of $v$ in a rooted tree. As we can color the graph with at most $|E|$ many colors, in this proof we use a palette of colors $\mathcal{C}=\{c_e\mid e\in E(T)\}$. That is whenever we color an edge $e$ with a new color, its color will be $c_e$, otherwise, $e$ will get a color of one of the already colored edges.

\begin{lemma}\label[lemma]{lem:treeconnectedcoloring}
	There exists a maximum $P_k$-free coloring of $T$, which is color connected.
\end{lemma}
\begin{proof}
Let $c$ be a maximum $P_k$-free coloring of $T$ with the minimum number of color connected components. If for every $c_i$, $T[c_i]$ has one connected component we are done. Otherwise,
    towards the contradiction, let
    $c_1$ be a color used in $c$, for which $T[c_1]$ has more than one connected components, $\{T_1,\ldots,T_r\}$ for some
    $r>1$. W.l.o.g. suppose $T_1$ is the component of $T[c_1]$ with the deepest root, in other words $\argmax_{i \in [r]} \min_{u \in V(T_i)} H_u$ equals to one. Since $r > 1$, the root of subtree $T_1$, $v$, has a parent. Let $e$ be the edge between $v$ and its parent. We recolor all of $E(T_1)$ with color $c(e)$. This clearly creates a new
    coloring $c'$ with the same set of colors as $c$; however, it
    has one less color connected component than $c$ which contradicts our minimality assumption on $c$.
    To complete the contradiction, it is sufficient to show that $c'$ is a
    $P_k$-free coloring.
    
    Towards the contradiction, let $P$ be a rainbow $P_k$ in $c'$. We perform a case distinction on $|E(P) \cap E(T_1)|$ to derive a contradiction. 
    \begin{enumerate}
        \item $|E(P) \cap E(T_1)| = 0$: In this case, the coloring of $P$ in $c$ and $c'$ is identical. Moreover, $P$ is not rainbow in $c$, hence $P$ is not rainbow in $c'$ either, a contradiction.
        
        \item $|E(P) \cap E(T_1)| = 1$: In this case, let $e' \in E(P) \cap E(T_1)$ be the only edge of $P$ that is recolored in $c'$. There must exist another edge $e''$ of $P$ which is colored by $c_1$. We know that $e'' \not\in E(T_1)$, so $e''$ is not incident to $v$. We claim that $e'' \not\in E(T_v)$. Suppose by  contradiction, $e'' \in E(T_v)$. Since $e'' \not\in E(T_1)$, w.l.o.g.  assume $e'' \in E(T_2)$. Since $T_1$ and $T_2$ are two disjoint connected components in $T_v$ and $v \in V(T_1)$, $\min_{u \in V(T_1)} H_u < \min_{u \in V(T_2)} H_u$ which contradicts the fact that $T_1$ is the component of $T[c_1]$ with deepest root. We showed that $e'' \not\in E(T_v)$. Since $|E(P) \cap E(T_1)| = 1$, its obvious that $e \in E(P)$. $c'(e) = c'(e')$, a contradiction.
        
        \item $|E(P) \cap E(T_1)| > 1$: In this case, at least two edges of $P$ have the same color $c(e)$, hence $P$ is not rainbow, a contradiction.
        
    \end{enumerate}
\end{proof}

The purpose of our algorithm is to find a maximum $P_k$-free color connected coloring of a tree, $T$, since by  \Cref{lem:treeconnectedcoloring} it is a maximum $P_k$-free coloring of $T$.

\begin{definition}[$L_1^v$, $L_2^v$]\label[definition]{def:L1L2}
For a color connected coloring $c$ of $T$, we define $L_1^v$ to be a longest rainbow path in $T_v$ starting from $v$. Moreover, let $L_2^v$ be the longest rainbow path such that $L_1^v$ and $L_2^v$ are edge disjoint and $L_1^v \cup L_2^v$ is also rainbow.  
\end{definition}

\begin{lemma}\label[lemma]{lem:longestpaths}
A color connected coloring $c$ of $T$ is $P_k$-free if and only if $|E(L_1^v)|+|E(L_2^v)|<k$, for all $v \in V(T)$.
\end{lemma}
\begin{proof}
If there exist $v \in V(T)$ such that $|E(L_1^v)| + |E(L_2^v)| \geq k$, $c$ is not a $P_k$-free coloring, since $L_1^v \cup L_2^v$ is a rainbow path. 

To prove the other direction of the lemma, first we need to prove the following claim.

\begin{cclaim}\label{clm:abcd}
For any $v \in V(T)$, $L_1^v \cup L_2^v$ is a maximum length rainbow path including $v$ in $T_v$.
\end{cclaim}

\begin{ClaimProof}\textit{of Claim~\ref{clm:abcd}.}
We prove the claim by contradiction, suppose there is a rainbow path which can be partitioned as $L_3 \cup L_4$, each starting from $v$, such that $|E(L_3)|+|E(L_4)|>|E(L_1^v)|+|E(L_2^v)|$. Since $L_1^v$ is a longest rainbow path we have that $|E(L_3)|,|E(L_4)| >|E(L_2^v)|$. Hence, $L_3$ and $L_4$ must have a common color with $L_1^v$. We know that the incident edge of $v$ in each path $L_1^v, L_3, L_4$ must have the same color, since $c$ is a color connected coloring. But we assumed that $L_3 \cup L_4$ is rainbow, a contradiction. Hence, the claim is proved.
\end{ClaimProof}

Now we can prove the remaining direction of the lemma. Suppose $P$ is a rainbow path in $T_v$. Thus, $P$ can be partitioned as $P_1 \cup P_2$, each starting from $u \in V(T_v)$. Note that $|E(L_1^u)|+|E(L_2^u)| < k$ by the lemma statement. Also, by the above claim, we know $|E(P)| \leq |E(L_1^u)|+|E(L_2^u)|$. Therefore, $|E(P)| < k$ for any arbitrary rainbow path in $T_v$.
\end{proof}

\begin{definition}[$D(v,i,j)$]\label[definition]{def:dp}
Let $i \geq j$, $i + j < k$, and $v \in V(T)$, we define $D(v,i,j)$ to be the number of distinct colors in a color connected maximum $P_k$-free coloring of $T_v$ such that $|E(L_1^v)|=i$ and $|E(L_2^v)|=j$.
\end{definition}

For $e=\{u,v\}$ where $v$ is the parent of $u$, we define $T_e$ to be a subgraph of $T_v$ with $E(T_u)\cup e$ as its edge set, that is a subgraph of $T_v$ that is hanging from $e$.
    
\begin{proof}[Proof of~\Cref{thm:TreePoly}]

%Let us describe an exact algorithm that computes $ar(T, P_k)$.
By~\Cref{def:dp}, we know that $ar(T,P_k)= \max \{D(r_T, i, j) | i + j < k \}$. We show that $D(v,i,j)$ can be computed using the values of $D(u, \cdot)$ for $u \in V(T_v) \setminus \{v\}$. Hence, $D(\cdot)$ can be computed by a post-order traversal of $T$.

To compute $D(v, i, j)$, if $v$ is a leaf of $T$, the only valid case is $D(v,0,0)$, since there is no edge in $T_v$. Hence, in the remaining, we suppose that $v$ is not a leaf. We proceed by case distinction based on types of children of $v$. A child $u$ of $v$ is of the following types:
\begin{inparaenum}
\item $u\in L_1^v$,
\item $u\in L_2^v$,
\item $u\notin L_1^v\cup L_2^v$
\end{inparaenum}

Now for each child $u$ of $v$ and $z \in [3]$, such that $e=\{v, u\} \in E(T)$, we define $A_{u, z}$ as the maximum number of distinct colors in $T_e$ if $u$ belongs to case $z$, such that it does not violate the definition of $D(v, i, j)$. Note that only one child of $v$ belongs to the first case. Also, for $j>0$, there is only one child of $v$ in the second case. Moreover, for $j=0$ there is not any child in the second case. All other children of $v$ belong to the third case. Therefore, we can compute $D(v, i, j)$ by \Cref{equ:jgre0} and \Cref{equ:jeq0}, for $j>0$ and $j=0$, respectively.
\setlength{\belowdisplayskip}{0pt} \setlength{\belowdisplayshortskip}{0pt}
\setlength{\abovedisplayskip}{0pt} \setlength{\abovedisplayshortskip}{0pt}
\begin{equation}\label[equation]{equ:jgre0}
D(v, i, j) = \max \{ A_{u_1, 1} + A_{u_2, 2} + \sum_{u \in C(v) \setminus \{u_1, u_2 \}}A_{u, 3} \big\vert u_1, u_2\in C(v), u_1 \neq u_2 \},
\end{equation}
\begin{equation}\label[equation]{equ:jeq0}
D(v, i, 0) = \max \{ A_{u_1, 1} + \sum_{u \in C(v) \setminus \{u_1\}} A_{u, 3} \big\vert u_1\in C(v) \}.
\end{equation} In what follows, we show how to compute the value of $A_{u, z}$.

%\begin{itemize}
\subparagraph*{$a)$\ $\mathbf{u\in L_1^v}$: }Let $e=\{u,v\}\in E(T)$ and $u\in L_1^v$. Then we have that $E(L_1^v) \setminus \{\{v, u\}\} $ is a rainbow path of length $i-1$. Observe that, since $c(e)$ is in at most one of $c(E(L_1^u))$ or $c(E(L_2^u))$, hence by appending $e$ to their tails, at least one of the two paths, $L_1^u$ or $L_2^u$, extends to a longer rainbow path. If $L_1^u$ extends to a longer rainbow path, we have $|E(L_1^u)| = i - 1$. Otherwise, $c(e) \in c(E(L^1_u))$ and by \Cref{def:L1L2} every rainbow path with greater length than $L^2_u$ starting from $u$ in $T_u$ has a common color with $L_1^u$. Moreover the common color is $c(e)$, since the coloring is color connected. Hence, $L_2^u$ is the longest rainbow path in $T_u$ that extends to a longer rainbow path which results in $|E(L_2^u)| = i - 1$. Therefore, $|E(L_1^u)| = i - 1$ or $|E(L_2^u)| = i - 1$. Thus, $A_{u, 1}$ equals to the maximum value obtained from these two cases.\noindent
\begin{enumerate}
\item
\textbf{$|E(L_1^u)| = i - 1$: } In this case, $e$ can get a new color $c_e$. Hence, the maximum number of distinct colors used in $T_e$ for $D(v,i,j)$ is $\max_{x<i} D(u,i-1,x) + 1$.
\item 
\textbf{$|E(L_2^u)| = i - 1, |E(L_1^u)| > i - 1:$ } Then $c(\{v, u\}) \in c(E(L_1^u))$, since the length of the longest rainbow path must not exceed $i$. Also, $e$ must have the same color as the incident edge of $u$ in $L_1^u$, since the coloring is color connected. However, in this case, $P:=L_2^u \cup e$ forms a rainbow path, since $c(e) \in c(E(L_1^u))$ and $|c(E(L_1^u)) \cap c(E(L_2^u))| = 0$. Moreover, $P$ is the longest rainbow path of $T_v$ starting with $e$, since every other path with longer length has a common color with $L_1^u$ and we are looking for a color connected coloring, thus this color is $c(e)$. So the maximum number of distinct colors used in $T_e$ for $D(v,i,j)$ in this case is $\max_{x\ge i} D(u, x, i - 1)$.
\end{enumerate}

\subparagraph*{$b)$\ $\mathbf{u\in L_2^v}$: }$A_{u, 2}$ can be computed similar to the previous case. 
\subparagraph*{$c)$\ $\mathbf{u\notin L_1^v\cup L_2^v}$: }
In the following let $e_1=\{v,u_1\}\in L_1^v$ and $e_2=\{v,u_2\}\in L_2^v$.
For every child $u$ of $v$ such that $u \notin \{u_1,u_2\}$, suppose that $x = |E(L_1^u)|, y = |E(L_2^u)|$. Also, let $e=\{u, v\}$. Hence, $A_{u, 3}$ is equal to the maximum value obtained from the following cases by iterating over all combination of $x$ and $y$ such that $x + y < k$ and $x \geq y$. 

\begin{enumerate}
    \item $x < j$: In this case, $e$ can get a new color $c_e$. Therefore, the optimal solution for this case of $T_e$ is $D(u, x, y) + 1$.

    \item $j \leq x < i$: In this case, $e$ can not get the new color $c_e$. For the contradiction, suppose that $e$ has the new color $c_e$. Therefore, $L_1^u$ will extend to a longer rainbow path with length $x + 1$ which starts from $v$. Moreover, we are looking for color connected coloring, thus the extended path has not any common color with $L_1^v$. Since $x + 1 > j$, it leads to a contradiction to the assumption that $L_2^v$ is the longest path such that $L_1^v \cup L_2^v$ is rainbow. Thus, $e$ cannot have a new color $c_e$. Hence, the optimal solution for this case of $T_e$ is at most $D(u, x, y)$. Let $c(e)$ = $c(e_1)$, then any rainbow path starting from $e$ in $T_e$ has length less than or equal to $L_1^v$ and has a common color with $L_1^v$. Therefore, the optimal solution for this case of $T_e$ is exactly $D(u, x, y)$.
    
    \item $i \leq x$ and $y < j$: In this case, $c(e) \in c(E(L_1^u))$, otherwise the concatenation of $e$ and $L_1^u$ creates a rainbow path of length $x + 1$ which is larger than length of $L_1^v$. Hence, $e$ must have the same color as the first edge of the path $L_1^u$ starting from $u$, since the coloring is color connected. Therefore, the optimal solution for this case of $T_e$ is $D(u, x, y)$.
    
    \item $i \leq x$ and $j \leq y < i$: In this case, $c(e) \in c(E(L_1^u))$, otherwise the concatenation of $e$ and $L_1^u$ creates a rainbow path longer than $L_1^v$, a contradiction. Let suppose $e_3$ be the first edge of the path $L_1^u$ which is incident to $u$.
    Hence, $e$ must have the same color as $e_3$, since we are looking for a color connected coloring. In addition, $e$ must have the same color as $e_1$, otherwise, $L_2^u$ extends to a rainbow path of length $y + 1$ which is longer that $L_2^v$ and it does not have any common color with $L_1^v$, a contradiction. Hence, $e$, $e_1$, and $e_3$ must have the same color. We have counted the color of $e_1$ as a distinct color before. On the other hand, we count the color of $e_3$ in the calculation of $D(u, x, y)$. Therefore, we have to subtract it by one to avoid duplication. Hence, the optimal solution for this case is at most $D(u, x, y) - 1$. Consider the coloring of $T_u$ that results $D(u, x, y)$ distinct colors. Let us recolor all edges in $T_u[c(e_3)]$ by $c(e_1)$. Also, let $c(e) = c(e_1)$. Length of the longest rainbow path starting from $v$ in $T_e$ in the proposed coloring is $y + 1$ which is not more than $i$. Furthermore, all rainbow paths starting from $v$ in $T_e$ have a common color with $L_1^v$, hence they do not violate the definition of $L_2^v$. Therefore, the optimal solution for this case of $T_e$ is exactly $D(u, x, y) - 1$.
    
    \item $i \leq x$ and $i \leq y$: In this case, as $i<y+1$, at least one of the $L_1^u\cup\{e\}$ or $L_2^u\cup\{e\}$ is a longer rainbow path than $L_1^v$, a contradiction to the choice of $L_1^v$. Therefore, this case is not possible and does not take part in the calculation of the value of the $D(v,i,j)$.
\end{enumerate}

Notice that we only defined $D(v, i , j)$ for $i + j < k$. Hence, by  \Cref{lem:longestpaths}, our coloring for every $D(v, i, j)$ is $P_k$-free color connected coloring.

\begin{claim}\label{clm:listselection}
	Let $A, B, C$ be three arrays of length $n$. There is an $O(n)$ algorithm for finding $\max \{A_s + B_t + \sum_{r \in [n] \setminus \{s, t\}} C_r| s \neq t , \{s,t\} \subseteq [n]\}$.
\end{claim}

\begin{ClaimProof}
First we define two arrays $A', B'$ of length $n$ as follow:
\begin{equation}
A'_i = A_i - C_i    \quad   \forall_{1 \leq i \leq n} \quad\quad
B'_i = B_i - C_i    \quad   \forall_{1 \leq i \leq n}
\end{equation}
Now the problem is reduced to finding $m = \max \{A'_i + B'_j \big\vert i \neq j\}$ in $O(n)$, since $m + \sum_{i=1}^{n}C_i$ is equal to $\max \{A_s + B_t + \sum_{r \in [n] \setminus \{s, t\}} C_r| s \neq t , \{s,t\} \subseteq [n]\}$. Let $L, R$ be two arrays of length $n$ such that $L_i = \max\{ B'_j \big\vert 1 \leq j \leq i\}$, $R_i = \max \{B'_j \big\vert i \leq j \leq n \}$. For $i = 1$, we have $L_i = B'_1$. For $i > 1$, we can obtain $L_i$ by iterating from $2$ to $n$ and calculating $L_i = \max \{L_{i-1}, B'_i\}$. Similarly, we can obtain $R_i$ by iterating from $n$ to $1$. This can be done in $O(n)$. Now we should find $\max \{ A'_i + \max\{L_{i-1}, R_{i+1}\} \big\vert 1 \leq i \leq n \}$ which can be done in $O(n)$ by checking all possible values of $i$. 
\end{ClaimProof}

According to the previous cases, we can compute $A_{u, z}$ for all $z \in [3]$ and $u \in C(v)$ in $O(k^2)$. Moreover, by \Cref{equ:jgre0}, \Cref{equ:jeq0}, and the above claim we can compute $D(v, i, j)$ in $O(deg(v))$, if we use dynamic programming approach. Therefore, the total time complexity of our algorithm is $O(|V(T)| k^4)$, since there are $O(|V(T)| k^2)$ values of $D(\cdot)$ that we need to compute.

\end{proof}

\section{Conclusions and Open Problems}
\label{sec:conclusions}

We studied the complexity of computing the anti-Ramsey
number for simple paths. We proved that computing the
$\textrm{ar}(G,P_k)$ is hard for every constant integer
$k>2$, and for $k=3$, the problem is hard to
approximate to a factor of $n^{- 1/2 - \epsilon}$.  To analyze the
exact complexity of the problem, we provided a
fine grain reduction, for a slight variation of it.  
 It remains unanswered whether the inapproximability result extends to all paths of length at least $3$.

On the positive side, we provided a linear time algorithm for
trees. Color connected coloring
does not extend to bounded treewidth graphs. However, we believe our
techniques can be employed
to provide an approximation algorithm for these
graphs. We covered paths in depth, another natural class of graphs to
be considered might
be complete graphs or cycles.

\bibliographystyle{abbrv}
 \bibliography{literature_survey}
\clearpage
\appendix

\end{document}